\newcites{supp}{Appendix References}
\newcommand\thefont{\expandafter\string\the\font}
\newcommand{\cmark}{\color{ForestGreen}{\ding{51}}}%
\newcommand{\xmark}{\color{BrickRed}{\ding{55}}}%
\providecommand{\bigsqcap}{%
  \mathop{%
    \mathpalette\@updown\bigsqcup
  }%
}
\newcommand*{\@updown}[2]{%
  \rotatebox[origin=c]{180}{$\m@th#1#2$}%
}
\newcommand*{\defeq}{\stackrel{\text{def}}{=}}
\def\generatepks{\ensuremath\Gamma{}}
\newcommand*{\MakePKS}[1]{\generatepks(#1)}
\newcommand*{\x}{\bot}
\newcommand*{\CommentSR}[1]{\todo{\color{red} #1}}
\newcommand*{\CommentJO}[1]{\todo{\color{blue} #1}}
\newcommand*{\ResponseSR}[1]{{\color{red} -- #1}}
\newcommand*{\ResponseJO}[1]{{\color{blue} -- #1}}
\newcommand*{\mck}[0]{\textbf{machine-check}}
\newdimen\numheight
\newdimen\numwidth
\newcommand{\resizedx}{\resizebox{\numwidth}{\numheight}{X}}
\newcommand{\tvbv}[1]{\StrSubstitute{#1}{X}{\resizedx}}
\newenvironment{kgraph}
{
\begin{tikzpicture}  [node distance = 1.35cm, on grid, auto, initial distance=0.3cm,
    every initial by arrow/.style = {-stealth},
    every state/.style={inner sep=0.02cm, minimum size=0.65cm},
    every edge/.style = {draw,-stealth},
    every loop/.style = {-stealth},
    ]
}{
\end{tikzpicture}}
\def\kgreen{LimeGreen!80!black}
\def\korange{Peach!70!white}
\def\kgrey{Gray!70!white}
\newcommand{\krstate}[3][\null]{
\setlength{\numheight}{\heightof{1}}
\setlength{\numwidth}{\widthof{1}}
  \IfBeginWith{#2}{0}{  
    \node (#2) [state, fill=\kgreen, #1] {#3};
  }{
    \IfBeginWith{#2}{1}{
        \node (#2) [state, fill=\korange, #1] {#3};
    }{
        \node (#2) [state, fill=\kgrey, #1] {#3};
    }
}
}
\newcommand{\kstate}[2][\null]{\krstate[#1]{#2}{\tvbv{#2}}}
\spnewtheorem*{sketch}{Proof sketch}{\itshape}{\rmfamily}
\begin{document}

\title{
Input-based\\Three-Valued Abstraction Refinement
}

\author{%
Jan Onderka\inst{1,2,3}\orcidID{0000-0003-2069-8584} \and
Stefan Ratschan\inst{2}\orcidID{0000-0003-1710-1513}
}%
\authorrunning{
J. Onderka and S. Ratschan
}
\institute{
Faculty of Engineering, University of Freiburg, Freiburg, Germany \\
\email{onderka@cs.uni-freiburg.de} \\ \and
Institute of Computer Science, The Czech Academy of Sciences, \\ Prague, Czech Republic \\
\email{stefan.ratschan@cs.cas.cz} \\ \and
Faculty of Information Technology, Czech Technical University in Prague \\ Prague, Czech Republic
}

\maketitle

\begin{abstract}

Unlike Counterexample-Guided Abstraction Refinement (CEGAR), Three-Valued Abstraction Refinement (TVAR) is able to verify all properties of the \textmu-calculus. We present a novel algorithmic framework for TVAR that employs a simulator-like approach to build and refine the abstract state space with input-based splitting. This leads to a state space formalism that is much simpler than in previous TVAR frameworks, which use modal transitions. We implemented the framework in our open-source tool \mck{} and verified properties of machine-code systems for the AVR architecture, showing the ability to verify systems and \textmu-calculus properties not verifiable by naïve model checking or CEGAR, respectively. This is the first practical use of TVAR for machine-code verification.
\end{abstract}

\begin{keywords}
Model checking \and Abstraction \and Partial Kripke Structure \and \textmu-calculus
\end{keywords}

\section{Introduction}
\label{sec:introduction}

Abstraction-refinement methodologies are ubiquitous in formal verification, the foremost being Counterexample-guided Abstraction Refinement (CEGAR) \cite{clarke2002,clarke2000}. Unfortunately, CEGAR does not support the whole propositional \textmu-calculus, and indeed not even Computation Tree Logic (CTL). This leaves a large class of potentially crucial non-linear-time properties unverifiable. Three-valued Abstraction Refinement (TVAR) is able to verify full \textmu-calculus. However, previous TVAR frameworks refined abstract states, necessitating state space formalisms based on modal transitions. Frameworks based on simple formalisms~\cite{godefroid2001,shoham2003} are not monotone: previously provable properties may no longer be provable after refinement. Intricate monotone formalisms \cite{shoham2004,gurfinkel2006,wei2011} were devised, their specialised semantics complicating the use of standard model-checking algorithms.

Another problem common to model-checking with abstraction is that the abstract state space cannot be built using a system simulator, which is possible for naïve explicit-state model checking \cite{schlich2006}. This is a problem especially with complex systems such as processors executing machine code. 

\textbf{Contribution.} Combining the ideas of TVAR and system simulators, we present a novel TVAR framework based on simulator-like generation of the abstract state space and performing refinements by splitting abstract inputs. Our framework does not use modal transitions, leading to simpler and more intuitive reasoning compared to the previous frameworks. Furthermore, it allows simple building and refinement of the abstract state space based on abstract simulators. We prove that the introduced framework is sound, monotone, and complete for \textmu-calculus properties and existential abstraction domains, provided simple requirements are met. Arbitrary digital systems formalised as automata can be verified, and the choices of abstraction domains and refinement strategy can be tailored to the specific use-case. Unlike the previous abstract-simulator approaches, our framework can be used for the full \textmu-calculus.

We implemented an instance of our framework in our formal verification tool \mck{}\footnote{Free and open-source, homepage at \url{https://machine-check.org/}. We discuss version 0.6.1, published at \url{https://crates.io/crates/machine-check/0.6.1}.}, where the systems are described as simulable finite-state machines in a subset of the Rust language, translated to abstract and refinement analogues used for generating and refining the state space \cite{onderka2024meco}. The ability to refine removes the need for hints such as where to use abstraction \cite{gueckel2012}.

We designed \mck{} especially for machine-code verification, and were able to verify non-toy programs for the AVR architecture and find a bug in a simplified version of a real-life program using a non-linear-time property not verifiable by CEGAR. To our knowledge, this is the first use of TVAR and verification of \textmu-calculus properties for machine-code systems.

\section{Previous Work}
\label{sec:previous}

In this section, we list previous relevant work on TVAR in roughly chronological order, with additional information available in summarising papers \cite{dams2018,godefroid2014}. After that, we discuss relevant work on abstract simulator approaches to model checking. 

\begin{example}
\label{ex:system}
Consider the finite-state machine in Figure~\ref{fig:system}, representing e.g.\@ a controller of aircraft landing gear retraction: if the most significant bit (\emph{msb}) of the state is 0, the landing gear is extended; if 1, it is retracted. The system is required to follow a single-bit input from the gear lever, with some slack for responses. There is a critical bug, occurring if the aircraft loses power in flight when the landing gear is retracted and the controller restarts in the state 000 after the power is regained: since the landing gear lever is set to retraction, the controller goes through the states 011 and 111, causing a \textbf{total loss of capability} to extend the gear again unless the controller is turned off and on again.

The bug is not just dangerous, but also sneaky, as it does not occur in normal aircraft operation. To protect ourselves against it, we can verify the property ``from every reachable system state, it should be possible to reach a state where the landing gear is extended'' holds in the system. This is formalised by a \emph{recovery property} \textbf{AG}[\textbf{EF}[$\lnot$\textit{msb}]] in CTL, not possible to check using CEGAR.

Since the system is buggy, the property should be disproved\footnote{In our terminology, \textit{proving} the property determines it holds in the system. \textit{Disproving} it determines it does not. \textit{Verification} aims to either prove or disprove it.}. We will instead reason about proving a dual property, \textbf{EF}[\textbf{AG}[\textit{msb}]]. In Figure \ref{fig:system}, \textit{msb} definitely holds in the state 101, and since 101 just loops on itself, \textbf{AG}[\textit{msb}] holds in it. As 101 is reachable from 000, \textbf{EF}[\textbf{AG}[$\textit{msb}$]] holds, and the bug is found. While such reasoning is easy for simple systems, in real life, the controller may have billions of possible states, requiring us to abstract some information away\footnote{The example is directly inspired by a bug we found, discussed in Section \ref{sec:evaluation}.}.
\end{example}

\textbf{Partial Kripke Structures (PKS).} In verification on \textit{partial state spaces}~\cite{bruns99}, some information is disregarded to produce a smaller state space. PKS enrich standard Kripke structures (KS) by allowing unknown state labellings.

\begin{figure}[t]
\centering
\begin{kgraph}
    \kstate[initial]{000}
    \kstate[right = 1.6cm of 000]{001}
    \kstate[below = 1.4cm of 001]{011}
    \kstate[right = 1.6cm of 001]{010}
    \kstate[right = 1.6cm of 011]{111}
    \kstate[right = 1.6cm of 010]{110}
    \kstate[right = 1.6cm of 111]{100}
    \kstate[right = 1.6cm of 100]{101}
    
    \path [-stealth]
    (000) edge node {0} (001)
    (000) edge node {1} (011)
    (001) edge node {0,1} (010)
    (011) edge node {0,1} (111)
    (010) edge [bend left] node {1}   (110)
    (010) edge [loop above]  node {0}()
    (111) edge node {0} (100)
    (111) edge [bend left=37] node {1} (101)
    (110) edge [bend left] node {0}   (010)
    (110) edge [loop above]  node {1}()
    (100) edge node {0,1} (101)
    (101) edge [loop above]  node {0,1}();
\end{kgraph}
\caption{Example system expressed as a finite-state machine. The states where $\lnot$\textit{msb} holds are \textcolor{\kgreen}{drawn green} while the states where \textit{msb} holds are \textcolor{\korange}{drawn orange}.}
\label{fig:system}
\end{figure}

\begin{definition}
\label{def:pks}
A partial Kripke structure (PKS) is a tuple $(S, S_0, R, L)$ over a set of atomic propositions $\mathbb{A}$ with the elements
\begin{itemize}
    \item $S$ (the \emph{set of states}),
    \item $S_0 \subseteq S$ (the \emph{set of initial states}),
    \item $R \subseteq S \times S$ (the \emph{transition relation}),
    \item $L: S \times \mathbb{A} \rightarrow \{0, 1, \x\}$ indicating for each atomic proposition whether it holds, does not hold, or its truth value is unknown (the \emph{labelling function}).
\end{itemize}
A Kripke Structure (KS) is a PKS with $L$ restricted to $S \times \mathbb{A} \rightarrow \{0, 1\}$.
\end{definition}

\begin{example}
While Figure \ref{fig:system} shows a finite-state machine, it can be converted to a Kripke structure by discarding the inputs, with $S = \{000, 001, \dots, 111\}$, $S_0 = \{ 000 \}$, and $R$ given by the transitions in Figure \ref{fig:system}. $L$ labels $msb$ in states $\{000, 001, 010, 011\}$ as 0, and in $\{100, 101, 110, 111\}$ as 1.

For proving \textbf{EF}[\textbf{AG}[$\textit{msb}$]], such a KS is unnecessarily detailed. Using PKS, we could e.g. combine 010 and 110 into a single \textit{abstract} state where it is unknown what the value of the most significant bit is, and the labelling of $\textit{msb}$ is $\bot$.
\end{example}

\textbf{Existential abstraction.} In TVAR, \textit{existential abstraction} \cite{clarke1994} is used, where the abstract states in set $\hat{S}$ are related to the original concrete states in $S$ by a function $\gamma: \hat{S} \rightarrow 2^S$, the abstract state $\hat{s} \in \hat{S}$ representing some (not fixed) concrete state in $s \in \gamma(\hat{s})$ in each system execution instant. This is a generalisation of domains usable for Abstract Interpretation \cite{cousot1977,cousot1979}, also allowing non-lattice domains such as wrap-around intervals \cite{gange2015}.

\begin{example}
In the examples, we will use the three-valued bit-vector domain, where each element is a tuple of three-valued bits, each with value `0' (definitely 0), `1' (definitely 1), or `X' (possibly 0, possibly 1). Except for figures, we write three-valued bit-vectors in quotes, e.g. $\gamma(\text{``0X1''}) = \{ 001, 011 \}$. The bits can also refer to a predicate rather than a specific value. For example, `1' could mean that $v > 5$ holds, `0' that its negation holds, and `X' that we do not know.
\end{example}

\subsection{Previous TVAR Frameworks}
\label{subsec:previous_tvar}
Building on the work of Bruns \& Godefroid \cite{bruns99,bruns2000,bruns2001}, Godefroid et al. \cite{godefroid2001} introduced TVAR by refining the abstract state set, using a state space formalism based on modal transitions. Early TVAR approaches \cite{godefroid2001,godefroid2002,shoham2003,grumberg2005,grumberg2007} were based on Kripke Modal Transition Structures (KMTS) and did not guarantee previously provable properties stay provable after refinement, i.e.\@ were not monotone.
\begin{definition}
\label{def:kmts}
A Kripke Modal Transition Structure (KMTS) is a five-tuple $(S, S_0, R^{\textrm{may}}, R^{\textrm{must}}, L)$ where $S, S_0,$ and $L$ follow Definition \ref{def:pks}, and
\begin{itemize}
    \item $R^{\textrm{may}} \subseteq S \times S$ is the set of transitions which may be present,
    \item $R^{\textrm{must}} \subseteq R^{\textrm{may}}$ is the set of transitions which are definitely present.
\end{itemize}
\end{definition}

Intuitively, KMTS allow for transitions with unknown presence ($R^{\textrm{may}} \setminus R^{\textrm{must}}$). PKS can be trivially converted to KMTS by setting $R^{\textrm{may}} = R^{\textrm{must}} = R$. While it is possible to convert a KMTS to an equally expressive PKS by moving the transition presence into the states \cite{godefroid2003}, this requires the set of states to be modified.

\textbf{Monotone frameworks.} Godefroid et al. recognised non-monotonicity as a problem and suggested keeping previous states when refining \cite[p.~3-4]{godefroid2001}. However, Shoham \& Grumberg showed the approach was not sufficient, since in certain cases, it prevents verification of additional properties after refinement. As a remedy, they introduced another monotone TVAR framework using Generalized KMTS for CTL \cite{shoham2004}, later extended to \textmu-calculus \cite{shoham2008}. Gurfinkel \& Chechik introduced a framework for verification of CTL properties on Boolean programs using Mixed Transition Systems \cite{gurfinkel2006}, later extended to lattice-based domains~\cite{gurfinkel2008}\footnote{An instance of the framework is implemented in the tool Yasm \cite{gurfinkel2006yasm}, available online at the time of writing \cite{yasm}. However, it does not support common programming language elements such as bitwise-operation statements (for example, $y = x \; \& \; 1$) nor full \textmu-calculus, which our tool can handle without problems.}. Wei et al. introduced a TVAR framework using Reduced Inductive Semantics for \textmu-calculus under which the results of model-checking GKMTS and MixTS are equivalent~\cite{wei2011}.

\begin{definition}
A Generalized KMTS (GKMTS) is a tuple $(S, S_0, R^{\textrm{may}}, R^{\textrm{must}}, L)$ where $S, S_0, R^{\textrm{may}},$ and $L$ follow Definition \ref{def:kmts} and $R^{\textrm{must}}: S \times 2^S$ is the set of \emph{hyper-transitions}, where $\forall (a, B) \in R^{\textrm{must}} \; . \; \forall b \in B \; . \; (a,b) \in R^{\textrm{may}}$.
\end{definition}

\begin{definition}
A Mixed Transition System (MixTS) is a tuple $(S, S_0, R^{\textrm{may}},$ $R^{\textrm{must}}, L)$ where $S, S_0, R^{\textrm{may}},$ and $L$ follow Definition \ref{def:kmts} and $R^{\textrm{must}} \subseteq S \times S$.
\end{definition}

\begin{figure}[t]
\centering
\begin{subfigure}[b]{0.25\textwidth}
\begin{kgraph}
    \kstate[initial]{XXX}
    \path [-stealth]
    (XXX) edge [loop right]  node {}();
\end{kgraph}
\caption{Only state ``XXX''.}
\label{fig:state_based_initial}
\end{subfigure}
\begin{subfigure}[b]{0.3\textwidth}
\begin{kgraph}
    \kstate[initial]{0XX}
    \kstate[right = of 0XX] {1XX}
    \path [-stealth]
    (0XX) edge [dashed, loop above]  node {}()
    (0XX) edge [dashed, bend left] node {}(1XX)
    (1XX) edge [dashed, bend left] node {}(0XX)
    (1XX) edge [dashed, loop above]  node {}();
\end{kgraph}
\caption{KMTS after refining (a).}
\label{fig:state_based_kmts}
\end{subfigure}
\begin{subfigure}{0.3\textwidth}
\begin{kgraph}
    \kstate[initial]{0XX}
    \kstate[right = of 0XX] {1XX}
    \coordinate (M1) at ($(0XX)!0.5!(1XX)$);
    \kstate[below = 0.8cm of M1] {XXX}
    \path [-stealth]
    (0XX) edge [dashed, loop below]  node {}()
    (0XX) edge [dashed, bend left] node {}(1XX)
    (1XX) edge [dashed, bend left] node {}(0XX)
    (1XX) edge [dashed, loop below]  node {}()
    (0XX) edge (XXX)
    (1XX) edge (XXX)
    (XXX) edge [dashed, loop left] (XXX)
    (XXX) edge [loop right] (XXX)
    ;
\end{kgraph}
\caption{MixTS after refining (a).}
\label{fig:state_based_mixts}
\end{subfigure}

\begin{subfigure}{0.25\textwidth}
\begin{kgraph}
    \kstate[initial]{0XX}
    \kstate[right = 1cm of 0XX] {1XX}
    \coordinate (M1) at ($(0XX)!0.3!(1XX)+(0,0.4)$);
    \coordinate (M2) at ($(1XX)!0.3!(0XX)-(0,0.4)$);
    \path [-,thick]
    (0XX) edge [-] (M1)
    (1XX) edge [-] (M2);
    \path [-stealth]
    (M1) edge [thick, out=45, in=90, looseness=4]  node {}(0XX)
    (M1) edge [thick, out=45, in=120, looseness=1] node {}(1XX)
    (M2) edge [thick, out=-135, in=-60, looseness=1] node {}(0XX)
    (M2) edge [thick, out=-135, in=-90, looseness=4]  node {}(1XX);
\end{kgraph}
\caption{GKMTS after refining (a).}
\label{fig:state_based_gkmts}
\end{subfigure}
\begin{subfigure}{0.25\textwidth}
\begin{kgraph}
    \kstate[initial]{0XX}
    \kstate[right = 1cm of 0XX] {11X}
    \kstate[below = 1.2cm of 11X] {10X}
    \coordinate (M1) at ($(0XX)!0.3!(11X)+(0,0.4)$);
    \coordinate (M3) at ($(0XX)!0.5!(10X)$);
    \coordinate (M4) at ($(M3)!0.4!(11X)$);
    
    \path [-,thick]
    (0XX) edge [-] (M1)
    (11X) edge [-] [thick] (M4)
    (10X) edge [thick, loop above] ()
    ;
    \path [-stealth]
    (M1) edge [thick, out=45, in=90, looseness=4]  node {}(0XX)
    (M1) edge [thick, out=45, in=120, looseness=1] node {}(11X)
    (M4) edge [thick, out=-135, in=-60, looseness=1.5] (0XX)
    (M4) edge [thick, out=-135, in=135, looseness=1.5] (10X)
    ;
\end{kgraph}
\caption{GKMTS after refining ``1XX'' in (d).}
\label{fig:state_based_cont}
\end{subfigure}
\begin{subfigure}{0.45\textwidth}
\centering
\begin{kgraph}
    \kstate[initial]{000}
    \kstate[right = 1.2cm of 000]{001}
    \kstate[below = 1.1cm of 001]{011}
    \kstate[right = 1.2cm of 001]{010}
    \kstate[right = 1.2cm of 011]{111}
    \kstate[right = 1.2cm of 010]{110}
    \kstate[right = 1.2cm of 111]{10X}
    
    \path [-stealth]
    (000) edge [thick] (001)
    (000) edge [thick] (011)
    (001) edge [thick] (010)
    (011) edge [thick] (111)
    (010) edge [thick, bend left]   (110)
    (010) edge [thick, loop above] ()
    (111) edge [thick] (10X)
    (110) edge [thick, bend left] (010)
    (110) edge [thick, loop above] ()
    (10X) edge [thick, loop above] ();
\end{kgraph}
\caption{GKMTS after refining ``0XX'' in (e) by splitting to ``000'', ``001'', ``010'', and ``011''.}
\label{fig:state_based_final}
\end{subfigure}

\caption{State-based refinement with hyper-transitions based on Generalised KMTS. Implied may-transitions, present in all sub-figures except for (c), are not drawn. The states where it is unknown whether \textit{msb} or $\lnot$\textit{msb} holds are \textcolor{\kgrey}{drawn grey}.}
\label{fig:state_based}
\vspace{-0.4em}
\end{figure}

\begin{example}
\label{ex:kmts}
    We will prove \textbf{EF}[\textbf{AG}[\textit{msb}]] using the previous state-based TVAR frameworks over the system from Figure \ref{fig:system}, starting with abstract state set \{``XXX''\}. Clearly, we both \textbf{may} and \textbf{must} transition from ``XXX'' to ``XXX'', visualised in Figure \ref{fig:state_based_initial}. Since $\textit{msb}$ is unknown in ``XXX'', the model-checking result is unknown and we refine.
    
    Suppose we decide to refine by splitting the abstract state set to \{``0XX'', ``1XX''\}, starting in ``0XX''. From ``0XX'', we \textbf{may} transition either to ``0XX'' (e.g. by 000 $\rightarrow$ 001 or 010 $\rightarrow$ 010) or ``1XX'' (e.g. by 010 $\rightarrow$ 110), but cannot conclude that e.g. a transition from ``0XX'' to itself \textbf{must} exist: $011 \in \gamma(\text{``0XX''})$ only transitions to $111 \not\in \gamma(\text{``0XX''})$.

    PKS cannot be used as they cannot describe unknown-presence transitions. KMTS allow this, producing Figure \ref{fig:state_based_kmts}. However, it is not possible to prove e.g.\@ \textbf{EX}[\textbf{true}], which was possible in Figure \ref{fig:state_based_initial}, i.e.\@ the refinement is not monotone. Using MixTS, we retain ``XXX'' and the must-transitions to it, producing a \emph{forced choice} in Figure \ref{fig:state_based_mixts}. Using GKMTS, we obtain Figure \ref{fig:state_based_gkmts} instead. In both, it is possible to prove \textbf{EX}[\textbf{true}], but not \textbf{EF}[\textbf{AG}[\textit{msb}]]. Refining further using GKMTS, we obtain Figure \ref{fig:state_based_cont}, where it is still not possible to prove \textbf{EF}[\textbf{AG}[\textit{msb}]]: the hyper-transitions do not imply that the path (``0XX'',``11X'',``10X'') corresponds to a concrete path. The property is only proven after additional refinement to Figure \ref{fig:state_based_final}. While the GKMTS in Figure \ref{fig:state_based_final} trivially corresponds to a KMTS or a PKS, fewer refinements and final states may be needed in general when using GKMTS or MixTS as they may guide the refinement better due to monotonicity.

\end{example}

\textbf{Model checking.} \textmu-calculus properties can be model-checked on PKS and KMTS by a simple conversion to two KS, applying standard model-checking algorithms, and combining the results \cite{bruns2000,godefroid2002}. Similar conversions are also possible for multi-valued logics \cite{konikowska2002,gurfinkel2003}. It is also possible to model-check directly without conversion. A multi-valued model-checker was previously used for the MixTS approach \cite{gurfinkel2006}. Game-based model-checking was used for full \textmu-calculus \cite{grumberg2005,grumberg2007}.

\textbf{Discussion of previous TVAR frameworks.} It was recognised early on that using KMTS with non-monotone refinement is problematic \cite{godefroid2001,shoham2003}. GKTMS seem more susceptible to exponential explosion as MixTS can make use of abstract domains. However, specialised algorithms must be used for MixTS to obtain GKMTS-equivalent results~\cite{wei2011}. A drawback of all mentioned approaches is their conceptual complexity which, in our opinion, is the main reason for the dearth of available TVAR tools and test sets, compared to CEGAR. This has also made the analysis of these methods difficult, as illustrated by the subtle differences in definitions of expressiveness identified by Gazda \& Willemse~\cite{gazda2012}.

\subsection{Abstract Simulator Approaches}
\label{subsec:simulation}
Naïve explicit-state model checking generates a next state from every state and input combination, with the ability to use a system simulator to perform each step. This simulator can be written in an imperative language such as C. Unfortunately, for machine-code systems where state sizes are in kilobytes even for simple microcontrollers and a single port read can produce e.g. $2^8$ next states, this results in exponential explosion for all but the simplest toy programs. We will now discuss approaches where the state space is abstract, but still built by generating the next states by an \emph{abstract simulator}.

\textbf{Trajectory Evaluation.} Bryant used three-valued logic simulators for formal verification of hardware circuits~\cite{bryant1991a,bryant1991c}. He showed linear-time properties (expressed by specification machines or circuit assertions) can be proven using a set of three-valued input sequences that together cover all concrete inputs \cite[p.~320]{bryant1991a} by generating permissible state sequences (\emph{trajectories}). 
\emph{Symbolic trajectory evaluation (STE)} is an extension that allows parametrisation of the introduced \emph{trajectory formulas} \cite{bryant1991b}. However, the STE formalism drops the distinction between inputs and states. We refer to Melham~\cite{Melham2018} for a discussion of STE and extensions. Notably, Generalized STE~\cite{yang2001} allows verification of properties corresponding to linear-time \textmu-calculus~\cite{dam1992}. While manual refinement was originally needed, automatic refinement was proposed for both STE \cite{tzoref2006} and GSTE \cite{chen2007}.

\textbf{Delayed Nondeterminism.} Noll \& Schlich \cite{noll2008} verified machine-code programs by model-checking an abstract state space generated by a simulation-based approach. Each input bit was read as `X' and split to `0' and `1' only when it was decided to in a subsequent step (e.g. if it was an argument of a branch instruction).  This allowed e.g. splitting only one bit of a read 8-bit port if the other bits were masked out by a constant first, allowing soundly proving (but not disproving) properties in path-universal logics such as LTL and ACTL.

\begin{figure}[t]
\centering
\begingroup
\renewcommand{\arraystretch}{0.3}
\begin{subfigure}{0.7\textwidth}
\centering
\begin{kgraph}
    \krstate[initial]{0_000}{\tvbv{000}}
    \krstate[right = of 0_000]{1_0X1}{\tvbv{0X1}}
    \krstate[right = of 1_0X1]{X_X1X}{\tvbv{X1X}}
    \krstate[right = of X_X1X]{X_XXX}{\tvbv{XXX}}

    \path [-stealth]
    (0_000) edge node {} (1_0X1)
    (1_0X1) edge node {} (X_X1X)
    (X_X1X) edge node {} (X_XXX)
    (X_XXX) edge [loop right]  node {}();
\end{kgraph}
\caption{Abstract state space corresponding to the system in Figure \ref{fig:system}, computed by simulation with unconstrained inputs (`X')}
\label{fig:simulation_normal}
\end{subfigure}

\begin{subfigure}{0.8\textwidth}
\centering
\begin{kgraph}
    
    \krstate[initial, initial text={\textbf{Case 1.} system}]{0_000}{\tvbv{000}}
    \krstate[right = of 0_000]{1_001}{\tvbv{001}}
    \krstate[right = of 1_001]{0_010}{\tvbv{010}}
    \krstate[right = of 0_010]{X_X10}{\tvbv{X10}}
    
    \krstate[circle split, initial, initial text={specification}, below = 0.8cm of 0_000]{X_0spec0}{\tvbv{00} \nodepart{lower} X}
    \krstate[circle split,right = of X_0spec0]{X_0spec1}{\tvbv{01} \nodepart{lower} X}
    \krstate[circle split,right = of X_0spec1]{0_0spec2}{\tvbv{10} \nodepart{lower} 1}
    \krstate[circle split,right = of 0_0spec2]{X_0spec3}{\tvbv{11} \nodepart{lower} X}
    
    \krstate[initial, initial text={\textbf{Case 2.} system}, below = 1.1cm of X_0spec0]{0_000'}{\tvbv{000}}
    \krstate[right = of 0_000']{1_011}{\tvbv{011}}
    \krstate[right = of 1_011]{0_111}{\tvbv{111}}
    \krstate[right = of 0_111]{X_10X}{\tvbv{10X}}
    
    \krstate[circle split,initial, initial text={specification}, below = 0.8cm of 0_000']{X_1spec0}{\tvbv{00} \nodepart{lower} X}
    \krstate[circle split,right = of X_1spec0]{X_1spec1}{\tvbv{01} \nodepart{lower} X}
    \krstate[circle split,right = of X_1spec1]{0_1spec2}{\tvbv{10} \nodepart{lower} 1}
    \krstate[circle split,right = of 0_1spec2]{X_1spec3}{\tvbv{11} \nodepart{lower} X}
    
    \path [-stealth]
    (0_000) edge node {0} (1_001)
    (1_001) edge node {X} (0_010)
    (0_010) edge node {X} (X_X10)

    (X_0spec0) edge node {0} (X_0spec1)
    (X_0spec1) edge node {X} (0_0spec2)
    (0_0spec2) edge node {X} (X_0spec3)
    
    (0_000') edge node {1} (1_011)
    (1_011) edge node {X} (0_111)
    (0_111) edge  node {X} (X_10X)
    
    (X_1spec0) edge node {1} (X_1spec1)
    (X_1spec1) edge node {X} (0_1spec2)
    (0_1spec2) edge node {X} (X_1spec3)
    ;
\end{kgraph}
\caption{Trajectory evaluation: The verification is split into cases, ensuring the abstract input sequences together cover all possible concrete input sequences. Unlike Bryant, we use initial states for consistency with other approaches.}
\label{fig:simulation_bryant}
\end{subfigure}

\begin{subfigure}{0.8\textwidth}
\centering
\begin{kgraph}
    \krstate[initial]{0_000}{\tvbv{000}}
    \krstate[right = of 0_000]{1_0X1}{\tvbv{0X1}}
    \krstate[right = of 1_0X1]{0_010}{\tvbv{010}}
    \krstate[below = 1.0cm of 0_010]{0_111}{\tvbv{111}}
    \krstate[right = of 0_010]{X_X10}{\tvbv{X10}}
    \krstate[right = of 0_111]{X_10X}{\tvbv{10X}}
    
    \path [-stealth]
    (0_000) edge node {} (1_0X1)
    (1_0X1) edge [dashed, yshift=-0.05cm] node {d.001} (0_010)
    (1_0X1) edge [dashed, swap] node {d.011} (0_111)
    (0_010) edge node {} (X_X10)
    (0_111) edge  node {} (X_10X)
    (X_X10) edge [loop right]  node {}()
    (X_10X) edge [loop right]  node {}();
\end{kgraph}
\caption{Delayed nondeterminism augmented with must-transitions: the `X' in state ``0X1'' is split to `0' and `1' before computing the successor}
\label{fig:simulation_delayed}
\end{subfigure}
\endgroup

\caption{Simulation-based approaches proving \textbf{A}[\textbf{X}[\textbf{X}[\textit{msb} $\Leftrightarrow$ \textit{lsb}]]]. (a) can be considered PKS or KMTS, and (c) KTMS. (b) contains four trajectories. Specially in this figure, system states are drawn \textcolor{\kgreen}{green} if $msb \Leftrightarrow lsb$ holds, \textcolor{\korange}{orange} if it does not, and \textcolor{\kgrey}{grey} if it is unknown. Specification states are coloured according to the output.}
\vspace{-0.3em}
\label{fig:simulation}
\end{figure}

\begin{example}
Due to the restrictions of the approaches, we will illustrate proving the property ``in two steps from the initial state, the most significant bit corresponds to the least significant bit'', i.e.\@ \textbf{A}[\textbf{X}[\textbf{X}[\textit{msb} $\Leftrightarrow$ \textit{lsb}]]]. Simulating without splitting, we produce Figure~\ref{fig:simulation_normal}, unable to prove the property.

To visualise Bryant's trajectory evaluation approach with explicitly considered inputs \cite{bryant1991a}, we encode the specification as a finite-state machine with two bits containing an initially-zero saturating counter. The system output function is \textit{msb} $\Leftrightarrow$ \textit{lsb}. The specification outputs `1' iff the counter is 10 and `X' otherwise. To prove the property, we split verification into two cases based on the value of the first input, and obtain simulated trajectories of both machines in Figure \ref{fig:simulation_bryant}. The property is proven as the trajectories are long enough (at least 3 for the given property) and the specification output always covers the system output.

To better understand Delayed Nondeterminism, we augment with must-transitions where possible. Splitting ``0X1'' from Figure~\ref{fig:simulation_normal}, we obtain Figure \ref{fig:simulation_delayed}, where ``010'' is obtained as a direct successor of ``001'', and ``111'' as a direct successor of ``011''. We cannot augment during the split as `X' might not generally correspond to a unique input, potentially e.g.\@ being copied before splitting.
\end{example}

\section{Input-based Three-valued Abstraction Refinement}
\label{sec:automata}

We propose a framework that eliminates the need for modal transitions in TVAR by combining ideas from the discussed approaches: using TVAR, build the abstract state space using an abstract simulator and split \textbf{inputs} instead of states.

\begin{figure}[t]
\centering
\begin{subfigure}{0.495\textwidth}
\centering
\begin{kgraph}
    \kstate[initial]{000}
    \kstate[right = 1.1cm of 000]{001}
    \kstate[below = 1.03cm of 001]{011}
    \kstate[right = 1.1cm of 001]{010}
    \kstate[right = 1.1cm of 011]{111}
    \kstate[right = 1.1cm of 010]{X10}
    \kstate[right = 1.1cm of 111]{10X}
    
    \path [-stealth]
    (000) edge node {0} (001)
    (000) edge [swap] node {1} (011)
    (001) edge node {X} (010)
    (011) edge node {X} (111)
    (010) edge node {X} (X10)
    (111) edge  node {X} (10X)
    (X10) edge [loop right]  node {X}()
    (10X) edge [loop right]  node {X}();
\end{kgraph}
\caption{Refining the input after ``000'' while keeping all others `X'.}
\label{fig:ours_refine}
\end{subfigure}
\begin{subfigure}{0.495\textwidth}
\centering
\begin{kgraph}
    \kstate[initial]{000}
    \kstate[right = 1.1cm of 000]{XXX}
    \kstate[below = 1.03cm of 001]{011}
    \kstate[right = 1.1cm of 011]{111}
    \kstate[right = 1.1cm of 111]{10X}
    
    \path [-stealth]
    (000) edge node {0} (XXX)
    (000) edge [swap] node {1} (011)
    (011) edge node {X} (111)
    (111) edge node {X} (10X)
    (XXX) edge [loop right]  node {X}()
    (10X) edge [loop right]  node {X}();
\end{kgraph}
\caption{A smaller reachable state space using a \textit{decayed} step function.}
\label{fig:ours_decay}
\end{subfigure}
\caption{Input-based Abstraction Refinement.}
\label{fig:ours}
\vspace{-0.75em}
\end{figure}

\begin{example}
We return to the original problem of proving \textbf{EF}[\textbf{AG}[$msb$]]. The simulation-based approach initially builds the abstract state space as shown in Figure \ref{fig:simulation_normal}. After that, we decide (using e.g. a heuristic, machine learning or human guidance) that the input after ``000'' should be split. We regenerate the abstract state space as shown in Figure \ref{fig:ours_refine}. We are immediately able to prove \textbf{EF}[\textbf{AG}[$msb$]] holds, meaning the system from Figure \ref{fig:system} contains a bug.

The part of the abstract state space in Figure \ref{fig:ours_refine} starting with ``001'' is unnecessarily large for proving the property, potentially causing exponential explosion problems. To prevent them, we also introduce a way to soundly and precisely regulate the outgoing states of transitions, allowing us to e.g. replace ``001'' by ``XXX'' as in Figure \ref{fig:ours_decay} when generating the abstract state space, \emph{decaying} to less information. Only one refinement was necessary compared to multiple in Example \ref{ex:kmts}, with the final state space in Figure \ref{fig:ours_decay} smaller than in Figure \ref{fig:state_based_final}. However, this depends\footnote{As with other frameworks, verification performance depends drastically on abstraction, refinement, and implementation choices, further discussed in Sections \ref{sec:choices} and \ref{sec:evaluation}.} on the correct choice to decay ``001'' and not ``011''.

\end{example}

The generated state spaces are PKS, which allows us to use previous work on PKS, KMTS, GKMTS, and MixTS, as PKS are trivially convertible to all. This notably includes model-checking using standard formalisms \cite{bruns2000,godefroid2002} and TVAR refinement guidance \cite{shoham2003,grumberg2005,grumberg2007}, with the caveat that we need to select an input instead of a state to refine. Unlike (G)STE \cite{bryant1991b,yang2001} and Delayed Nondeterminism~\cite{noll2008} which were limited to linear-time or path-universal properties, our approach can be used for the full \textmu-calculus. Verification can be fully automatic or manually guided, and it is also possible to precisely control the number of reachable abstract states and transitions: we can split inputs up to one by one, and decay any newly reachable states before refining the applied decay.

We will now give the framework formalism and simple requirements for its instances to be sound, monotone, and complete, proving that the requirements are sufficient in Appendix \ref{app:proofs}. We will then discuss how reasonable choices of refinements can be made in Section \ref{sec:choices}. Finally, we will evaluate an implementation of an instance of our framework in \mck{} in Section \ref{sec:evaluation}.

\subsection{Framework Formalism}

We assume that the original Kripke Structure has only one initial state\footnote{This is merely a formal choice. For multiple initial states, a dummy initial state can be introduced before them and the verified property $\phi$ converted to \textbf{AX}[$\phi$].}, i.e. $K=(S, \{s_0\}, R, L)$. We write the result of model-checking a property $\phi$ against $K$ as $\llbracket \phi \rrbracket(K)$, which returns 0 or 1. For a PKS~$\hat{K}$, $\llbracket \phi \rrbracket(\hat{K})$ returns 0, 1, or $\bot$. 

We consider the original (concrete) system to be an automaton and will also use automata for abstracting the system, introducing the formalism of \emph{generating automata} that can generate partial Kripke structures.

\begin{definition}
    A \emph{generating automaton} (GA) is a tuple $G=(S, s_0, I, q, f, L)$ with the elements
\begin{itemize}
    \item $S$ (the \emph{set of automaton states}),
    \item $s_0 \in S$ (the \emph{initial state}),
    \item $I$ (the \emph{set of all step inputs}),
    \item $q: S \rightarrow 2^I \setminus \{ \emptyset \}$ (the \emph{input qualification function}), 
    \item $f: S \times I \rightarrow S$ (the \emph{step function}),
    \item $L: S \times \mathbb{A} \rightarrow \{0,1,\x\}$ (the \emph{labelling function}).
\end{itemize}
\end{definition}

\begin{definition}
\label{def:generatepks}
For a generating automaton $(S, s_0, I, q, f, L)$, we define the PKS-generating function $\generatepks$ as
\begin{align}
    \generatepks((S, s_0, I, q, f, L)) \defeq & \; (S, \{s_0\}, R, L) \label{eq:generatepks} \\
    \text{where }  R = & \; \{ (s, f(s, i)) \; | \; s \in S, i \in q(s) \}.\label{eq:concrete_generation}
\end{align}
\end{definition}

\goodbreak

We call a generating automaton $G=(S, s_0, I, q, f, L)$ \emph{concrete} if the labelling function $L: S \times \mathbb{A} \rightarrow \{0,1\}$ (disallowing the value $\x$) and for all $s\in S$, $q(s)=I$.
A concrete GA corresponds to a Moore machine with the output of each state mapping each atomic proposition from $\mathbb{A}$ to either 0 or 1.

\begin{center}
\vspace{-1em}
  \captionof{algorithm}{Input-based Three-valued Abstraction Refinement Framework}\label{alg:framework}
{ \footnotesize
\begin{algorithmic}
\State \hspace{-1em} \textbf{Require}: a concrete generating automaton $(S, s_0, I, q, f, L)$, a \textmu-calculus property $\phi$
\State \hspace{-1em} \textbf{Ensure}: return $\llbracket\phi\rrbracket((S, s_0, I, q, f, L))$ \Comment{If requirements are fulfilled, see Corollary \ref{col:soundness}}
\State
    \State $(\hat{S}, \hat{s}_0, \hat{I}, \hat{q}, \hat{f}, \hat{L}) \gets \Call{Abstract}{S, s_0, I, q, f, L}$
    \While{$(r \gets \llbracket \phi \rrbracket(\Call{\generatepks}{(\hat{S}, \hat{s}_0, \hat{I}, \hat{q}, \hat{f}, \hat{L})}) = \bot$}
        \State $(\hat{q}, \hat{f}) \gets \Call{Refine}{\hat{S}, \hat{s}_0, \hat{I}, \hat{q}, \hat{f}, \hat{L}}$
    \EndWhile
    \State \Return $r$
\end{algorithmic}
\vspace{-0.75em}
\hrulefill
\vspace{-0.5em}
}
\end{center}

Algorithm \ref{alg:framework} describes our framework. Given a concrete GA, it abstracts it to an \emph{abstract} generating automaton $(\hat{S}, \hat{s}_0, \hat{I}, \hat{q}, \hat{f}, \hat{L})$, successively refining the input qualification function $\hat{q}$ and step function $\hat{f}$ until the result of model-checking is non-$\x$. $\hat{S}$ and $\hat{I}$ are related to $S$ and $I$ by a state concretization function\footnote{We forbid abstract elements with no concretizations as they do not represent any concrete element. Practically speaking, this does not disqualify abstract domains with such elements; we just require such elements are not produced by $\hat{s}_0$, $\hat{q}$, or $\hat{f}$.} $\gamma: \hat{S} \rightarrow 2^S \setminus \{\emptyset\}$ and an input concretization function $\zeta: \hat{I} \rightarrow 2^I \setminus \{\emptyset\}$.

Unlike state-based TVAR, the set of abstract states $\hat{S}$ does not change during refinement. The number of states to be considered is limited by the codomain of~$\hat{f}$, allowing structures such as Binary Decision Diagrams to be used. The abstract state space can be built quickly by forward simulation. For backward simulation, care must be taken to pair the states according to inputs.

\begin{example}
\label{ex:mck_ga}
In \mck, states and inputs are composed of bit-vector and bit-vector-array variables, formally represented by flattened $S = \{ 0, 1 \}^w$ and $I = \{ 0, 1 \}^y$ for finite state width $w$ and finite input width $y$. A dummy $s_0$ precedes the actual initial system states, $f$ is a function written in an imperative programming language, and $L$ computes relational operations on state variables.

For the abstract GA, we use three-valued bit-vector abstraction~\cite{schlich2006,noll2008} with fast abstract operations~\cite{onderka2022}, abstracting as
\begin{subequations}
\label{eq:mck_basic}
\begin{align}
    &\gamma^{\text{bit}}(\hat{a}) = \{v \in \mathbb{B} \; | \; (v = 0 \Rightarrow a \neq \text{`1'}) \land (v = 1 \Rightarrow a \neq \text{`0'}) \},\\
    &\hat{S} = \{ \text{`0'}, \text{`1'}, \text{`X'} \}^w, \, \gamma(\hat{s}) = \{ s \in S \; | \; \forall k \in [0, w-1] \; . \; s_k \in \gamma^{\text{bit}}(\hat{s}_k) \}, \label{eq:mck_basic_sub_states}\\
    &\hat{I} = \{  \text{`0'}, \text{`1'}, \text{`X'} \}^y, \, \zeta(\hat{i}) = \{ i \in I \; . \; \forall k \in [0, y-1] \; . \; i_k \in \gamma^{\text{bit}}(\hat{i}_k) \}. \label{eq:mck_basic_sub_inputs}
\end{align}
\end{subequations}
Again, $\hat{s}_0$ is a dummy state with $\gamma(\hat{s}_0) = \{s_0\}$. We rewrite the step function $f$ into an abstract function $\hat{f}^{\text{basic}}: \hat{S} \times \hat{I} \rightarrow \hat{S}$. To formalise the manipulation in Figure \ref{fig:ours}, we use an \emph{input precision function} $\hat{p}_{\hat{q}}: \hat{S} \rightarrow \{0, 1\}^y$ and a \emph{step precision function} $\hat{p}_{\hat{f}}: \hat{S} \rightarrow \{ 0, 1\}^w$, defining the result of $\hat{q}$ and $\hat{f}$ in each bit $k$ by
\begin{subequations}
\label{eq:mck_precision}
\begin{align}
    &(\hat{p}_{\hat{q}}(\hat{s})_k = 0 \Rightarrow \hat{q}(\hat{s})_k = \{ \text{`X'}  \} ) \land (\hat{p}_{\hat{q}}(\hat{s})_k = 1 \Rightarrow \hat{q}(\hat{s})_k = \{ \text{`0'}, \text{`1'} \}), \label{eq:mck_precision_sub_input} \\
    &(\hat{p}_{\hat{f}}(\hat{s})_k = 0 \Rightarrow \hat{f}(\hat{s}, \hat{i})_k = \text{`X'}) \land (\hat{p}_{\hat{f}}(\hat{s})_k = 1 \Rightarrow \hat{f}(\hat{s}, \hat{i})_k = \hat{f}^\textrm{basic}(\hat{s}, \hat{i})_k). \label{eq:mck_precision_sub_step}
\end{align}
\end{subequations}
This allows precise control of the size of the reachable abstract state space. For each $\hat{s} \in \hat{S}$, if $\hat{p}_{\hat{q}}(\hat{s}) = (0)^y$, there is exactly one outgoing transition. Each bit set to $1$ increases that up to a factor of 2. If $\hat{p}_{\hat{f}}(\hat{s})=(0)^w$, there is exactly one outgoing transition to the ``most-decayed'' state $(\text{`X'})^w$.
\end{example}

\subsection{Soundness, Monotonicity, and Completeness}
\label{subsec:characteristics}

\def\coarse{\uparrow}
\def\fine{\downarrow}

In this subsection, we state the requirements sufficient to ensure soundness (the algorithm returns the correct result if it terminates), monotonicity (refinements never lose any information), and completeness (the algorithm always terminates). We sketch the proofs and defer the full proofs to Appendix \ref{app:proofs}.

To intuitively describe the requirements, we formalise the concept of \textit{coverage}. An abstract state~$\hat{s}$ or input~$\hat{i}$ \emph{covers} a concrete $s \in S$ or $i \in I$ exactly when $s \in \gamma(\hat{s})$ or $i \in \zeta(\hat{i})$, respectively, and it \emph{covers} another abstract state $\hat{s}^* \in \hat{S}$ or input $\hat{i}^* \in \hat{I}$ exactly when $\gamma(\hat{s}^*) \subseteq \gamma(\hat{s})$ or $\zeta(\hat{i}^*) \subseteq \zeta(\hat{i})$, respectively.

We want abstraction to preserve the truth value of \textmu-calculus properties in the following sense:
\begin{definition}
    A partial Kripke structure $K^\coarse$ is \emph{sound} with respect to a partial Kripke structure~$K^\fine$ if, for every property $\phi$ of \textmu-calculus over the set of atomic propositions $\mathbb{A}$, it holds that
    \begin{equation}
        \llbracket \phi \rrbracket(K^\coarse) \neq \bot \Rightarrow \llbracket \phi \rrbracket(K^\fine) = \llbracket \phi \rrbracket(K^\coarse).
    \end{equation}
\end{definition}

Intuitively, $K^\coarse$ can contain less information than $K^\fine$, turning some non-$\x$ proposition results to $\x$. No other differences are possible.

To ensure the soundness of Algorithm~\ref{alg:framework}, we use the following requirements. Soundness is ensured with any refinement heuristic as long as they are met.

\begin{definition}
\label{def:abstraction_soundness}
A GA $\hat{G}=(\hat{S}, \hat{s}_0, \hat{I}, \hat{q}, \hat{f}, \hat{L})$ is a \emph{soundness-guaranteeing $(\gamma,\zeta)$-abstraction} of a concrete GA $G=(S, s_0, I, q, f, L)$ iff 
\begin{subequations}
\label{eq:soundness}
\begin{align}
    &\gamma(\hat{s}_0) = \{ s_0 \}, \label{eq:soundness_sub_initial}\\
    &\forall \hat{s} \in \hat{S} \; . \; \forall s \in \gamma(\hat{s}) \; . \; \forall a \in \mathcal{A} \; . \; (\hat{L}(\hat{s}, a) \neq \bot \Rightarrow \hat{L}(\hat{s}, a) = L(s, a)), \label{eq:soundness_sub_labelling}\\
    &\forall (\hat{s}, i) \in \hat{S} \times I \; . \; \exists \hat{i} \in \hat{q}(\hat{s}) \; . \; i \in \zeta(\hat{i}), \label{eq:soundness_sub_input_coverage}\\
    &\forall (\hat{s}, \hat{i}) \in \hat{S} \times \hat{I} \; . \; \forall (s, i) \in \gamma(\hat{s}) \times \zeta(\hat{i}) \; . \; f(s, i) \in \gamma(\hat{f} (\hat{s}, \hat{i})). \label{eq:soundness_sub_step_coverage}
\end{align}
\end{subequations}
\end{definition}

Informally, the four requirements express the following:
\begin{enumerate}[(a)]
\item \textbf{Initial state concretization.} The abstract initial state has exactly the concrete initial state in its concretization.
\item \textbf{Labelling soundness.} Each abstract state labelling must either correspond to the labelling of all concrete states it covers or be unknown.
\item \textbf{Full input coverage.} In every abstract state, each concrete input must be covered by some qualified abstract input.
\item \textbf{Step soundness.} Each result of the abstract step function must cover all results of the concrete step function where its arguments are covered by the abstract step function arguments.

\end{enumerate}

The requirements ensure the soundness of the used abstractions as follows.

\begin{theorem}[Soundness]
\label{th:soundness}
For every generating automaton~$\hat{G}$ and concrete generating automaton~$G$, state concretization function~$\gamma$, and input concretization function~$\zeta$ such that 
 $\hat{G}$ is a soundness-guaranteeing $(\gamma,\zeta)$-abstraction of $G$, the partial Kripke structure $\generatepks(\hat{G})$ is sound with respect to $\generatepks(G)$.
\end{theorem}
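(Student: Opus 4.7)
The plan is to prove a stronger pointwise statement by structural induction on $\phi$: writing $\llbracket\phi\rrbracket_{\hat{K}}(\hat{s})$ for the three-valued semantics of $\phi$ at the state $\hat{s}$ of $\hat{K}$ (so that $\llbracket\phi\rrbracket(\hat{K}) = \llbracket\phi\rrbracket_{\hat{K}}(\hat{s}_0)$), setting $K = \generatepks(G)$, $\hat{K} = \generatepks(\hat{G})$, and $H = \{(s, \hat{s}) \in S \times \hat{S} \mid s \in \gamma(\hat{s})\}$, I would show that for every property $\phi$ and every $(s, \hat{s}) \in H$, if $\llbracket\phi\rrbracket_{\hat{K}}(\hat{s}) \neq \bot$ then $\llbracket\phi\rrbracket_K(s) = \llbracket\phi\rrbracket_{\hat{K}}(\hat{s})$. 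Because $\gamma(\hat{s}_0) = \{s_0\}$ by clause (\ref{eq:soundness_sub_initial}), specialising at $(s_0, \hat{s}_0)$ discharges the theorem at the level of the single PKS-level values.

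The central technical content is showing that $H$ behaves as a mixed simulation, which I would verify by walking through the clauses of Definition~\ref{def:abstraction_soundness}. Clauses (\ref{eq:soundness_sub_initial}) and (\ref{eq:soundness_sub_labelling}) give initial relatedness and label compatibility verbatim. For \emph{forward matching}, given $(s, s') \in R$ with $H(s, \hat{s})$, write $s' = f(s, i)$ for some $i \in I = q(s)$; clause (\ref{eq:soundness_sub_input_coverage}) yields $\hat{i} \in \hat{q}(\hat{s})$ with $i \in \zeta(\hat{i})$, and clause (\ref{eq:soundness_sub_step_coverage}) places $f(s, i) \in \gamma(\hat{f}(\hat{s}, \hat{i}))$, so $\hat{s}' = \hat{f}(\hat{s}, \hat{i})$ is the matching abstract successor. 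For \emph{backward matching}, given $(\hat{s}, \hat{f}(\hat{s}, \hat{i})) \in \hat{R}$ and $s \in \gamma(\hat{s})$, pick any $i \in \zeta(\hat{i})$ (nonempty by the blanket restriction on concretizations); since $G$ is concrete, $q(s) = I$, so the edge $(s, f(s, i)) \in R$ exists, and clause (\ref{eq:soundness_sub_step_coverage}) again places $f(s, i) \in \gamma(\hat{f}(\hat{s}, \hat{i}))$, matching the abstract successor.

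The induction then dispatches the cases of $\phi$: atomic propositions use (\ref{eq:soundness_sub_labelling}) directly; Boolean connectives follow from the induction hypothesis under the information order; the modality cases use forward and backward matching to transport witnesses in either direction; and the least/greatest fixed-point cases are handled by the usual approximant argument. I expect the fixed-point case to be the main obstacle, because the three-valued Knaster--Tarski iterations on $K$ and $\hat{K}$ must be compared approximant-by-approximant and shown to stabilise compatibly whenever the abstract limit is non-$\bot$. The cleanest route I would take is to reduce $\hat{K}$ to its pessimistic and optimistic two-valued completions following Bruns and Godefroid~\cite{bruns2000,godefroid2002}, observe that the transition-level forward-and-backward matching above makes $H$ an ordinary bisimulation between $K$ and each completion (their transition structure being identical to $\hat{K}$'s), invoke the classical two-valued $\mu$-calculus preservation theorem on each completion, and then recombine: $\llbracket\phi\rrbracket_{\hat{K}}(\hat{s})$ is non-$\bot$ precisely when the two completions agree, at which point $\llbracket\phi\rrbracket_K(s)$ coincides with both via the bisimulation.
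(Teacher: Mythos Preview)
Your core argument coincides with the paper's: you define the same relation $H = \{(s,\hat s)\mid s\in\gamma(\hat s)\}$ and verify the same forward and backward transition-matching conditions using clauses~(\ref{eq:soundness_sub_input_coverage}) and~(\ref{eq:soundness_sub_step_coverage}) together with $q(s)=I$ and nonemptiness of $\zeta$. The paper packages exactly these checks into a lemma on generating automata (Lemma~\ref{lemma:lemma}), observes that they amount to $H$ being a modal simulation in the sense of Definition~\ref{def:modal_simulation}, and then simply \emph{invokes} the known preservation result (Property~\ref{prop:soundness}, due to Huth et al.\ via~\cite{dams2018}) rather than re-deriving it. Your structural-induction plan would re-prove that preservation result from scratch, which is correct but not needed here; the paper's route is shorter, yours more self-contained.

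One caution on your alternative route through two-valued completions: the optimistic and pessimistic completions in the Bruns--Godefroid reduction are not two fixed Kripke structures but are constructed relative to the polarity of each atom occurrence in the formula at hand, so $H$ is not an ordinary label-agreeing bisimulation between $K$ and a single fixed completion. The argument still goes through, but the correct statement is that $H$ is a \emph{simulation} in the appropriate direction for each completion (for the given formula), which is exactly the content of the modal-simulation preservation theorem you would otherwise be citing. Your direct structural induction avoids this subtlety and is the cleaner of your two options.
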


\begin{sketch}
Show\footnote{We based the full proofs in Appendix \ref{app:proofs} on the proof sketches given in this section.} that $\{ (s, \hat{s}) \; | \; \hat{s} \in \hat{S} \land s \in \gamma(\hat{s})\}$ is a modal simulation~\cite[p.~408]{dams2018} from $\Gamma(G)$ to $\Gamma(\hat{G})$ due to \eqref{eq:soundness}. Then, $\hat{G}$ is sound wrt. $G$ due to a previous theorem on preservation of \textmu-calculus formulas~\cite[p.~410]{dams2018}.
\end{sketch}

\begin{corollary}
\label{col:soundness}
Assume that the functions { \sc Abstract } and  { \sc Refine } ensure that the generating automaton $(\hat{S}, \hat{s}_0, \hat{I}, \hat{q}, \hat{f}, \hat{L})$ in Algorithm~\ref{alg:framework} is always a soundness-guaranteeing $(\gamma,\zeta)$-abstraction of $(S, s_0, I, q, f, L)$. Then, if the algorithm terminates, its result is correct.
\end{corollary}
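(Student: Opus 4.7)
The plan is to unwind the algorithm's exit condition and apply Theorem~\ref{th:soundness} directly. First, I would observe that the while-loop of Algorithm~\ref{alg:framework} only terminates when the model-checking result $r = \llbracket \phi \rrbracket(\generatepks(\hat{G}))$ differs from $\bot$; consequently, whenever the algorithm returns, it returns precisely this value of $r$, computed for the final abstract generating automaton $\hat{G} = (\hat{S}, \hat{s}_0, \hat{I}, \hat{q}, \hat{f}, \hat{L})$.

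Next, by the hypothesis of the corollary, the routines \textsc{Abstract} and \textsc{Refine} jointly maintain the invariant that the current $\hat{G}$ is a soundness-guaranteeing $(\gamma,\zeta)$-abstraction of the given concrete automaton $G = (S, s_0, I, q, f, L)$. Since this invariant holds at every iteration, it in particular holds at the moment of termination, and therefore by Theorem~\ref{th:soundness} the partial Kripke structure $\generatepks(\hat{G})$ is sound with respect to $\generatepks(G)$.

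Finally, I would instantiate the definition of soundness with the property $\phi$, yielding the implication $\llbracket \phi \rrbracket(\generatepks(\hat{G})) \neq \bot \Rightarrow \llbracket \phi \rrbracket(\generatepks(G)) = \llbracket \phi \rrbracket(\generatepks(\hat{G}))$. The antecedent is guaranteed by the loop's exit condition, so $r = \llbracket \phi \rrbracket(\generatepks(G))$, matching the specification given on the \textbf{Ensure} line of Algorithm~\ref{alg:framework}.

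There is essentially no obstacle here: the corollary is a direct one-line consequence of Theorem~\ref{th:soundness} and the shape of the loop. The only point that requires care is articulating that the soundness invariant need only be invoked for the terminating $\hat{G}$ (not for intermediate ones), and that the exit condition supplies exactly the premise that the definition of soundness requires. No induction on refinement steps or case analysis on the value of $r$ is necessary.
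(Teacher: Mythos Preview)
Your proposal is correct and matches the paper's treatment: the paper states Corollary~\ref{col:soundness} immediately after Theorem~\ref{th:soundness} without a separate proof, treating it as an immediate consequence, and your unwinding of the loop exit condition together with a single application of Theorem~\ref{th:soundness} is exactly the intended (trivial) argument.
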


\begin{example}
\label{ex:mck_soundness}
Continuing from Example \ref{ex:mck_ga}, \eqref{eq:soundness_sub_initial} is fulfilled trivially. \eqref{eq:soundness_sub_input_coverage} is fulfilled due to \eqref{eq:mck_basic_sub_inputs} and \eqref{eq:mck_precision_sub_input}. From \eqref{eq:mck_precision_sub_step}, it is apparent that
\begin{equation}
    \forall (\hat{s}, \hat{i}) \in (\hat{S}, \hat{I}) \; . \; \gamma(\hat{f}^\textrm{basic}(\hat{s}, \hat{i})) \subseteq \gamma(\hat{f}(\hat{s}, \hat{i})),
\end{equation}
i.e. results of $\hat{f}$ cover results of $\hat{f}^\textrm{basic}$ that abstracts $f$. We carefully implemented the translation of $f$ to $\hat{f}^\textrm{basic}$ and $\hat{L}$ so that\eqref{eq:soundness_sub_labelling} and \eqref{eq:soundness_sub_step_coverage} hold.
\end{example}

\def\rold{}
\def\roldp{\phantom{'}}
\def\rnew{'}

Next, we turn to monotonicity, which ensures no algorithm loop iteration loses information. We give the requirements for the refinement to guarantee it.
\begin{definition}
A generating automaton $(\hat{S}, \hat{s}_0, \hat{I}, \hat{q}, \hat{f}, \hat{L})$ is monotone wrt. $(\gamma,\zeta)$-coverage iff 
\begin{subequations}
\label{eq:ga_monotonicity}
\begin{align}
& \begin{split}
    &\forall (\hat{s}\rold, \hat{s}\rnew, a) \in \hat{S} \times \hat{S} \times \mathcal{A} \; . \; \\
    &\hspace{0.8cm}((\gamma(\hat{s}\rnew) \subseteq \gamma(\hat{s}\rold) \land \hat{L}(\hat{s}\rold, a) \neq \bot) \Rightarrow \hat{L}(\hat{s}\rnew, a) = \hat{L}(\hat{s}\rold, a)),\\
\end{split} \label{eq:labelling_monotonicity}\\
& \begin{split}
    &\forall (\hat{s}\rold, \hat{s}\rnew, \hat{i}\rold, \hat{i}\rnew) \in \hat{S} \times \hat{S} \times \hat{I} \times \hat{I} \; . \; \\
    &\hspace{0.8cm}((\gamma(\hat{s}\rnew) \times \zeta(\hat{i}\rnew) \subseteq \gamma(\hat{s}\rold) \times \zeta(\hat{i}\rold)) \Rightarrow \gamma(\hat{f}(\hat{s}\rnew, \hat{i}\rnew)) \subseteq \gamma(\hat{f}(\hat{s}\rold, \hat{i}\rold)).
\end{split} \label{eq:step_monotonicity}
\end{align}
\end{subequations}
\end{definition}

Informally, we require that each abstract state has at least as much labelling information as each abstract state it covers, and the abstract step function result covers each result produced using covered arguments.

\begin{definition}
\label{def:monotone}
The generating automaton $(\hat{S}, \hat{s}_0, \hat{I}, \hat{q}\rnew, \hat{f}\rnew, \hat{L})$ is a $(\gamma,\zeta)$-monotone refinement of the generating automaton $(\hat{S}, \hat{s}_0, \hat{I}, \hat{q}, \hat{f}, \hat{L})$ iff it is monotone wrt. $(\gamma,\zeta)$-coverage and 
\begin{subequations}
\label{eq:monotonicity}
\begin{align}
    &\forall \hat{s} \in \hat{S} \; . \; \forall \hat{i}\rnew \in \hat{q}\rnew(\hat{s}) \; . \; \exists \hat{i}\rold \in \hat{q}\rold(\hat{s}) \; . \; \zeta(\hat{i}\rnew) \subseteq \zeta(\hat{i}\rold), \label{eq:monotonicity_sub_new_inputs}\\
    &\forall \hat{s} \in \hat{S} \; . \; \forall \hat{i}\rold \in \hat{q}\rold(\hat{s}) \; . \; \exists \hat{i}\rnew \in \hat{q}\rnew(\hat{s}) \; . \;  \zeta(\hat{i}\rnew) \subseteq \zeta(\hat{i}\rold), \label{eq:monotonicity_sub_old_inputs}\\
    &\forall (\hat{s}, \hat{i}) \in \hat{S} \times \hat{I} \; . \; \gamma(\hat{f}\rnew(\hat{s}, \hat{i})) \subseteq \gamma(\hat{f}\rold(\hat{s}, \hat{i})) \label{eq:monotonicity_sub_step_coverage}.
\end{align}
\end{subequations}
\end{definition}
Informally, in addition to the monotonicity wrt. coverage, we also require:
\begin{enumerate}[(a)]
    \item \textbf{New qualified inputs are not spurious.} Each new qualified input is covered by at least one old qualified input.
    \item \textbf{Old qualified inputs are not lost.} Each old qualified input covers at least one new qualified input.
    \item \textbf{New step function covered by old.} The result of the new step function is always covered by the result of the old step function. 
\end{enumerate}
The need for both quantifier combinations in the first two requirements in Equation~\ref{eq:monotonicity} may be surprising. Their violations correspond to transition addition and removal, respectively, which could make a previously non-$\bot$ property $\bot$.

\begin{theorem}[Monotonicity]
\label{th:monotonicity}
If the generating automaton $\hat{G}\rnew$ is a $(\gamma,\zeta)$-mono\-tone refinement of the generating automaton $\hat{G}\rold$, then 
for every \textmu-calculus property $\psi$ for which $\llbracket \psi \rrbracket(\generatepks(\hat{G}\rold)) \neq \bot$, it also holds $\llbracket \psi \rrbracket(\generatepks(\hat{G}\rnew)) \neq \bot$.
\end{theorem}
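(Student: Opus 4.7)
The plan is to reduce the claim to a preservation theorem for the three-valued \textmu-calculus on PKS under a mixed-simulation--style relation (e.g.\ of Bruns and Godefroid). I define $H \subseteq \hat{S} \times \hat{S}$ to be the coverage relation $H = \{(s, s') \mid \gamma(s') \subseteq \gamma(s)\}$, with the first coordinate interpreted in the coarser structure $\generatepks(\hat{G})$ and the second in the finer $\generatepks(\hat{G}')$. Since $(\hat{s}_0, \hat{s}_0) \in H$ trivially, once $H$ is verified to be a mixed simulation, the preservation theorem applied at this pair will deliver $\llbracket \psi \rrbracket(\generatepks(\hat{G})) \sqsubseteq \llbracket \psi \rrbracket(\generatepks(\hat{G}'))$ in the three-valued information order ($\bot \sqsubset 0$, $\bot \sqsubset 1$). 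A non-$\bot$ value in $\generatepks(\hat{G})$ is then preserved, with the same value, in $\generatepks(\hat{G}')$, which is the statement of the theorem.

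Three conditions on $H$ will be checked. First, the labelling condition $\hat{L}(s, a) \sqsubseteq \hat{L}(s', a)$ for $(s, s') \in H$ follows immediately from \eqref{eq:labelling_monotonicity}: if $\hat{L}(s, a) \neq \bot$ it equals $\hat{L}(s', a)$, and otherwise $\bot$ is below every value. Second, a forth condition: for $(s, s') \in H$ and $(s, t) \in R$, I will construct $(s', t') \in R'$ with $(t, t') \in H$ by writing $t = \hat{f}(s, \hat{i})$ for some $\hat{i} \in \hat{q}(s)$, invoking \eqref{eq:monotonicity_sub_old_inputs} to obtain some $\hat{i}^\ast \in \hat{q}'(s)$ with $\zeta(\hat{i}^\ast) \subseteq \zeta(\hat{i})$, transporting this input to a qualified $\hat{i}' \in \hat{q}'(s')$ of at most the same concretization, and finally combining the step monotonicity \eqref{eq:step_monotonicity} of $\hat{G}$ with \eqref{eq:monotonicity_sub_step_coverage} to obtain $\gamma(\hat{f}'(s', \hat{i}')) \subseteq \gamma(\hat{f}(s, \hat{i})) = \gamma(t)$. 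Third, a symmetric back condition, proved analogously, starting from a transition in $R'$ at $s'$, using \eqref{eq:monotonicity_sub_new_inputs} to retract its qualified input into $\hat{q}(s')$, and then lifting it to $\hat{q}(s)$ before applying step monotonicity on the coarser side. Given the relation $H$ and its three properties, the preservation theorem itself is proved by structural induction: atoms by the labelling condition, Booleans by information-order monotonicity of the three-valued connectives, modalities by the forth and back conditions applied pointwise over $H$-related states, and fixpoints by Knaster--Tarski ordinal induction on approximants, tracking $H$-related state pairs throughout.

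The main obstacle is the careful verification of the forth and back conditions in the case $s \neq s'$. Since \eqref{eq:monotonicity_sub_new_inputs} and \eqref{eq:monotonicity_sub_old_inputs} relate $\hat{q}$ and $\hat{q}'$ only at a single state, transporting a qualified input across the coverage-related pair $(s, s')$ requires chaining these one-state conditions with the step monotonicity \eqref{eq:step_monotonicity} of both $\hat{G}$ and $\hat{G}'$, which crucially holds across \emph{arbitrary} state--input coverage pairs, and with the step-refinement condition \eqref{eq:monotonicity_sub_step_coverage}. I expect this input-lifting step to be the technically most delicate part of the argument, while the surrounding formula induction is routine for three-valued \textmu-calculus once $H$ is in place.
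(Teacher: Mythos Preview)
Your plan coincides with the paper's: take $H$ to be the coverage relation on $\hat S$, verify the three modal-simulation clauses, and invoke the preservation theorem (the paper packages the first two steps as Lemma~\ref{lemma:lemma} and then appeals to Property~\ref{prop:soundness}). The labelling clause is exactly \eqref{eq:labelling_monotonicity}, and once you have inputs with $\zeta(\hat i')\subseteq\zeta(\hat i)$, the conclusion $\gamma(\hat f'(\hat s',\hat i'))\subseteq\gamma(\hat f(\hat s,\hat i))$ does follow by chaining \eqref{eq:step_monotonicity} with \eqref{eq:monotonicity_sub_step_coverage}. One small slip: only $\hat G'$ is assumed monotone wrt.\ coverage, so \eqref{eq:step_monotonicity} is available for $\hat f'$, not for $\hat f$; the chain should read $\gamma(\hat f'(\hat s',\hat i'))\subseteq\gamma(\hat f'(\hat s,\hat i))\subseteq\gamma(\hat f(\hat s,\hat i))$ rather than going through $\hat f(\hat s',\hat i')$.

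The substantive gap is precisely the step you call ``most delicate'', and your sketch does not close it. For $(\hat s,\hat s')$ with $\gamma(\hat s')\subseteq\gamma(\hat s)$ and $\hat s'\neq\hat s$, given $\hat i'\in\hat q'(\hat s')$ you need some $\hat i\in\hat q(\hat s)$ with $\zeta(\hat i')\subseteq\zeta(\hat i)$ (and dually for the other direction). Conditions \eqref{eq:monotonicity_sub_new_inputs} and \eqref{eq:monotonicity_sub_old_inputs} relate $\hat q$ and $\hat q'$ only at the \emph{same} state, and nothing in Definitions~6 or~7 lets you transport a qualified input between distinct coverage-related states: \eqref{eq:step_monotonicity} constrains $\hat f'$, not $\hat q'$, so ``chaining'' cannot manufacture the missing $\hat q$-monotonicity. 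The paper's own proof elides the same point---it asserts that the required $\hat i\in\hat q(\hat s)$ ``exists due to \eqref{eq:monotonicity_sub_new_inputs}'', which at $\hat s'$ only yields an element of $\hat q(\hat s')$. Concretely, take $\gamma(b)\subsetneq\gamma(a)$ with $\hat q(a)=\hat q'(a)=\{x\}$ and $\hat q(b)=\hat q'(b)=\{y\}$ for $\zeta(x),\zeta(y)$ disjoint, and refine only $\hat f(0,\cdot)$ from $a$ to $b$: all of \eqref{eq:coverage_monotonicity} and \eqref{eq:monotonicity} hold, yet the second-step successors in $\Gamma(\hat G)$ and $\Gamma(\hat G')$ are governed by unrelated inputs and the coverage simulation breaks. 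A clean repair is to add a coverage-monotonicity requirement on the input qualification (which the paper's concrete instance does satisfy via \eqref{eq:mck_monotone_sub_input} and \eqref{eq:mck_precision_sub_input}); with that extra hypothesis your argument, and the paper's, goes through.
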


\begin{sketch}
Show that $\{ (\hat{s}\rnew, \hat{s}\rold) \; | \; \hat{s}\rnew \in \hat{S}\rnew \land s\rold \in S\rold)\}$ is a modal simulation~\cite[p.~408]{dams2018} from $\Gamma(\hat{G}\rnew)$ to $\Gamma(\hat{G}\rold)$ due to \eqref{eq:ga_monotonicity} and \eqref{eq:monotonicity}. Then, $\hat{G}\rold$ is sound wrt. $\hat{G}\rnew$ due to a previous theorem on preservation of \textmu-calculus formulas~\cite[p.~410]{dams2018}.
\end{sketch}

\begin{corollary}
If the update in the loop of Algorithm~\ref{alg:framework} performs a $(\gamma,\zeta)$-monotone refinement of the generating automaton $(\hat{S}, \hat{s}_0, \hat{q}, \hat{f}, \hat{L})$ and for a \textmu-calculus property $\psi$, it held that $\llbracket\psi\rrbracket(\generatepks((\hat{S}, \hat{s}_0, \hat{q}, \hat{f}, \hat{L})))\neq\x$ before the loop iteration, then this is also the case after the iteration.
\end{corollary}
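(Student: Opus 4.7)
The corollary is an essentially immediate consequence of Theorem~\ref{th:monotonicity}, so the plan is just to identify the two generating automata to which the theorem should be applied and to verify the hypothesis of the theorem.

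Concretely, let $\hat{G}^{\rold} = (\hat{S}, \hat{s}_0, \hat{I}, \hat{q}, \hat{f}, \hat{L})$ be the generating automaton at the start of the loop iteration, and let $\hat{G}^{\rnew} = (\hat{S}, \hat{s}_0, \hat{I}, \hat{q}\rnew, \hat{f}\rnew, \hat{L})$ be the automaton obtained after the call to \textsc{Refine}. Note that the loop body only updates $\hat{q}$ and $\hat{f}$ (as made explicit in Algorithm~\ref{alg:framework}), so the two automata share the state set, the initial state, the input set and the labelling function, which matches the form required by Definition~\ref{def:monotone}. The hypothesis of the corollary states that this update is a $(\gamma,\zeta)$-monotone refinement, \ie{} $\hat{G}^{\rnew}$ is a $(\gamma,\zeta)$-monotone refinement of $\hat{G}^{\rold}$ in the sense of Definition~\ref{def:monotone}.

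Since the further hypothesis supplies $\llbracket \psi \rrbracket(\generatepks(\hat{G}^{\rold})) \neq \x$, an application of Theorem~\ref{th:monotonicity} to $\hat{G}^{\rold}$, $\hat{G}^{\rnew}$ and $\psi$ yields $\llbracket \psi \rrbracket(\generatepks(\hat{G}^{\rnew})) \neq \x$, which is the desired conclusion for the state of the loop after the iteration.

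There is no real obstacle here: the only thing worth pointing out is the bookkeeping observation that the loop only touches $\hat{q}$ and $\hat{f}$, which is precisely what Definition~\ref{def:monotone} anticipates (it fixes $\hat{S}$, $\hat{s}_0$, $\hat{I}$ and $\hat{L}$ across the two automata). Once this is noted, the corollary is obtained by a single invocation of Theorem~\ref{th:monotonicity}.
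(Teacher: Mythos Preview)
Your proposal is correct and matches the paper's treatment: the corollary is stated immediately after Theorem~\ref{th:monotonicity} without a separate proof, since it follows by a direct application of that theorem to the before/after automata of a loop iteration. Your bookkeeping observation that only $\hat{q}$ and $\hat{f}$ change is exactly the point that makes Definition~\ref{def:monotone} applicable.
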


Now we turn to termination. We use the following requirements to ensure that the algorithm makes progress.

\begin{definition}
\label{def:strict_monotone}
The generating automaton $(\hat{S}, \hat{s}_0, \hat{I}, \hat{q}\rnew, \hat{f}\rnew, \hat{L})$ is a \emph{strictly} $(\gamma,\zeta)$-monotone refinement of the generating automaton $(\hat{S}, \hat{s}_0, \hat{I}, \hat{q}, \hat{f}, \hat{L})$ if it is a monotone refinement and either
\begin{align}
    &\exists \hat{s} \in \hat{S} \; . \; \exists \hat{i}\rold \in \hat{q}\rold(\hat{s}) 
\; . \; \forall \hat{i}\rnew \in \hat{q}\rnew(\hat{s}) \; . \; \exists i \in \zeta(\hat{i}\rold) \; . \; i \not\in \zeta(\hat{i}\rnew), \label{eq:input_tightening} \\
\text{or\hspace{0.5em}}
    &\exists (\hat{s}, \hat{i}) \in \hat{S} \times \hat{I} \; . \; \exists s \in \gamma(\hat{f}\rold(\hat{s}, \hat{i})) \; . \; s \not\in \gamma(\hat{f}\rnew(\hat{s}, \hat{i})). \label{eq:step_tightening}
\end{align}
\end{definition}

\begin{example}
\label{ex:mck_monotonicity}
Continuing from Example \ref{ex:mck_soundness}, during each refinement, we set at least one bit of $\hat{p}_{\hat{q}}$ and/or $\hat{p}_{\hat{f}}$ to 1, and prohibit resetting them to 0, fulfilling \eqref{eq:monotonicity}. We implemented $\hat{L}$ so that \eqref{eq:labelling_monotonicity} holds. \eqref{eq:step_monotonicity} holds due to \eqref{eq:mck_precision_sub_step}. As setting bits in $\hat{p}_{\hat{f}}$ does not necessarily mean $\hat{f}$ changes, we set them until $R$ in $\Gamma(\hat{G})$ changes. This means $\hat{q}$ or $\hat{f}$ change as per \eqref{eq:input_tightening} or \eqref{eq:step_tightening}, and the refinement is strictly monotone.
\end{example}

\begin{theorem}[Completeness]
\label{th:completeness}
If $\hat{S}$ and $\hat{I}$ are finite, there is no infinite sequence of generating automata that are soundness-guaranteeing $(\gamma,\zeta)$-abstrac\-tions of some $G$ such that all subsequent pairs in the sequence are strictly $(\gamma,\zeta)$-monotone refinements.
\end{theorem}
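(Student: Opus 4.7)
The plan is to construct a natural-number valued potential $\Phi$ on generating automata that is non-increasing under every $(\gamma,\zeta)$-monotone refinement and \emph{strictly} decreases whenever the refinement is strictly monotone. Since $(\mathbb{N},<)$ is well-founded, the existence of such a $\Phi$ immediately rules out an infinite strictly monotone chain. Note that along any such chain only $\hat{q}$ and $\hat{f}$ can change (Definition~\ref{def:monotone} fixes $\hat{S}$, $\hat{s}_0$, $\hat{I}$, and $\hat{L}$), so finiteness of $\hat{S}$ and $\hat{I}$ alone already bounds $\Phi$ in $\mathbb{N}$; no assumption on $S$ or $I$ is required.

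I would take $\Phi = \Phi_f + \Phi_q$, each summand targeting one disjunct of Definition~\ref{def:strict_monotone}. For the step function, set
$$\Phi_f(\hat{G}) = \sum_{(\hat{s},\hat{i}) \in \hat{S} \times \hat{I}} \bigl\lvert \{\hat{s}' \in \hat{S} : \gamma(\hat{s}') \subseteq \gamma(\hat{f}(\hat{s},\hat{i}))\} \bigr\rvert.$$
Requirement~\eqref{eq:monotonicity_sub_step_coverage} makes every summand non-increasing, and combining it with the strictness clause~\eqref{eq:step_tightening} yields $\gamma(\hat{f}\rnew(\hat{s}^*,\hat{i}^*)) \subsetneq \gamma(\hat{f}\rold(\hat{s}^*,\hat{i}^*))$ at some $(\hat{s}^*,\hat{i}^*)$, so the state $\hat{f}\rold(\hat{s}^*,\hat{i}^*)$ itself witnesses a strict drop of that summand. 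For the input qualification, I would use the downward closure of $\hat{q}(\hat{s})$ under input coverage: let
$$D_{\hat{G}}(\hat{s}) = \{\hat{i} \in \hat{I} : \exists \hat{i}' \in \hat{q}(\hat{s}),\; \zeta(\hat{i}) \subseteq \zeta(\hat{i}')\}, \qquad \Phi_q(\hat{G}) = \sum_{\hat{s} \in \hat{S}} \lvert D_{\hat{G}}(\hat{s})\rvert.$$
Transitivity of $\subseteq$ together with~\eqref{eq:monotonicity_sub_new_inputs} gives $D_{\hat{G}\rnew}(\hat{s}) \subseteq D_{\hat{G}\rold}(\hat{s})$, and the strictness clause~\eqref{eq:input_tightening} produces a qualified $\hat{i}\rold \in \hat{q}\rold(\hat{s}^*)$ that lies trivially in $D_{\hat{G}\rold}(\hat{s}^*)$ but, failing to be covered by any $\hat{i}\rnew \in \hat{q}\rnew(\hat{s}^*)$, is absent from $D_{\hat{G}\rnew}(\hat{s}^*)$, forcing $\Phi_q$ to drop.

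The hard part is picking the input-side measure correctly. More obvious candidates such as $\sum_{\hat{s}} \lvert \hat{q}(\hat{s})\rvert$ or $\sum_{\hat{s}}\sum_{\hat{i}\in \hat{q}(\hat{s})} \lvert \zeta(\hat{i})\rvert$ can remain constant or even grow when a monotone refinement splits a single qualified input into several smaller ones whose concretizations still jointly cover the original, and so fail to witness the decrease forced by~\eqref{eq:input_tightening}. The downward-closure count $\lvert D_{\hat{G}}(\hat{s})\rvert$ sidesteps this by counting elements of $\hat{I}$ dominated by some qualified input, a quantity that~\eqref{eq:monotonicity_sub_new_inputs} guarantees can only shrink across a monotone refinement. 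With $\Phi$ in hand the conclusion is immediate: an infinite strictly monotone chain would produce an infinite strictly descending sequence in $\mathbb{N}$, contradicting well-foundedness.
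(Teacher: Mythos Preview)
Your proposal is correct. Both the potential $\Phi_f$ and $\Phi_q$ behave as claimed: \eqref{eq:monotonicity_sub_step_coverage} forces each summand of $\Phi_f$ to be non-increasing, the element $\hat f\rold(\hat s^*,\hat i^*)$ itself leaves the corresponding set when \eqref{eq:step_tightening} holds, and your downward-closure trick makes $D_{\hat G}(\hat s)$ genuinely antitone under \eqref{eq:monotonicity_sub_new_inputs} while \eqref{eq:input_tightening} expels the old qualified input $\hat i\rold$ from it.

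The paper argues differently: it attaches to each $\hat G$ in the chain a ``fingerprint'' $M_{\hat G}$ recording, for every $\hat s$, the set $\{\zeta(\hat i)\mid \hat i\in\hat q(\hat s)\}$ together with the map $\hat i\mapsto\gamma(\hat f(\hat s,\hat i))$, and shows that any two automata along the chain have \emph{distinct} fingerprints (once some $\zeta(\hat i)$ is no longer realised by $\hat q$, monotonicity prevents it from ever returning, and similarly for $\hat f$), then bounds the number of possible fingerprints by a finite quantity depending only on $|\hat S|$ and $|\hat I|$. Your strictly decreasing $\mathbb N$-valued potential is a more direct well-foundedness argument and is arguably cleaner, since it avoids the two-step ``pairwise distinct plus finite range'' reasoning; the paper's route, on the other hand, produces an explicit (if loose) upper bound $|\hat S|(2^{|\hat I|}+|\hat I|\,2^{|\hat S|})$ on the length of the chain, which your potential also gives implicitly via $\Phi\le |\hat S|\,|\hat I|+|\hat S|^2|\hat I|$. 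The two arguments are close cousins---both ultimately exploit that \eqref{eq:monotonicity_sub_new_inputs} and \eqref{eq:monotonicity_sub_step_coverage} are one-directional---but the technical packaging is genuinely different.
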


\begin{sketch}

Assume such a sequence exists. For each sequence element $\hat{G}$, define a function $M_{\hat{G}}(\hat{s}) = (\{ \zeta(\hat{i}) \; | \;  \hat{i} \in \hat{q}_{\hat{G}}(\hat{s}) \}, \{ (\hat{i} \in \hat{I}, \gamma(\hat{f}_{\hat{G}}(\hat{s}, \hat{i}))) \})$. Show that $M_{\hat{G}}$ must be different for different sequence elements. As there is only a finite number of different functions $M_{\hat{G}}$, the sequence cannot be infinite.

\end{sketch}

\begin{corollary}
\label{cor:completeness}
If $\hat{S}, \hat{I}$ are finite, the functions \Call{Abstract}{} and \Call{Refine}{} in Algorithm~\ref{alg:framework} ensure that $(\hat{S}, \hat{s}_0, \hat{I}, \hat{q}, \hat{f}, \hat{L})$ always is a \emph{soundness-guaranteeing $(\gamma,\zeta)$-abstraction} of $(S, s_0, I, q, f, L)$, and calls of \Call{Refine}{} perform strict $(\gamma,\zeta)$-mono\-tone refinements, then the algorithm returns the correct result in finite time.
\end{corollary}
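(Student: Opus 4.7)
The plan is to combine the three main results in a direct way: Theorem \ref{th:completeness} yields finite termination, while Corollary \ref{col:soundness} (built on Theorem \ref{th:soundness}) yields correctness of the returned value once the loop exits. No additional machinery should be needed beyond carefully matching the hypotheses.

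First I would argue termination. Consider the sequence $\hat{G}_0, \hat{G}_1, \hat{G}_2, \ldots$ of generating automata produced inside Algorithm \ref{alg:framework}, where $\hat{G}_0$ is the output of \Call{Abstract}{} and each subsequent $\hat{G}_{n+1}$ arises from $\hat{G}_n$ by one call of \Call{Refine}{}. By the hypotheses of the corollary, every $\hat{G}_n$ is a soundness-guaranteeing $(\gamma,\zeta)$-abstraction of the given concrete $G = (S, s_0, I, q, f, L)$, and for every $n$ the automaton $\hat{G}_{n+1}$ is a strict $(\gamma,\zeta)$-monotone refinement of $\hat{G}_n$. Theorem \ref{th:completeness} forbids any infinite such sequence when $\hat{S}$ and $\hat{I}$ are finite. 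Hence the sequence is finite, which means that after finitely many iterations the while-loop guard $r = \bot$ must fail, and the algorithm exits.

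Next I would argue correctness of the returned value. On exit, either the loop was never entered, or the body has just been skipped because $r \neq \bot$; in both cases the final automaton $\hat{G}_N$ satisfies $r = \llbracket \phi \rrbracket(\Gamma(\hat{G}_N)) \neq \bot$. Since $\hat{G}_N$ is, by hypothesis, a soundness-guaranteeing $(\gamma,\zeta)$-abstraction of $G$, Theorem \ref{th:soundness} tells us that $\Gamma(\hat{G}_N)$ is sound with respect to $\Gamma(G)$. Applying the definition of soundness to the property $\phi$, the fact that $\llbracket \phi \rrbracket(\Gamma(\hat{G}_N)) \neq \bot$ forces $\llbracket \phi \rrbracket(\Gamma(G)) = \llbracket \phi \rrbracket(\Gamma(\hat{G}_N)) = r$. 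Because $G$ is concrete, $\Gamma(G)$ is (isomorphic to) the underlying Kripke structure of the concrete system, so $r$ is precisely the value the algorithm is required to return.

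I do not expect a real obstacle here: the statement is essentially a bookkeeping combination of the three preceding theorems. The only care-point is to verify that the invariants required by Theorem \ref{th:completeness} hold along the concrete algorithmic sequence, in particular that consecutive refinements are strictly monotone (given directly by the hypothesis on \Call{Refine}{}) and that every element of the sequence remains a soundness-guaranteeing abstraction of the same concrete $G$ (given directly by the hypothesis on \Call{Abstract}{} and \Call{Refine}{}). Once those invariants are recorded, termination and correctness follow immediately from the cited results.
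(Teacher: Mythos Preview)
Your proposal is correct and matches the paper's intended reasoning: the paper states this result as an immediate corollary without an explicit proof, relying exactly on Theorem~\ref{th:completeness} for termination and Corollary~\ref{col:soundness} (via Theorem~\ref{th:soundness}) for correctness of the returned value. Your write-up simply spells out the bookkeeping the paper leaves implicit.
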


Fulfilling the requirements of Corollary \ref{cor:completeness} is not trivial. We must exclude the situation when no strict $(\gamma,\zeta)$-monotone refinement is possible any more while a non-$\bot$ result has not yet been reached. We propose Lemma~\ref{lemma:strict} for easier reasoning.

\def\thineq{\kern-0.15em=\kern-0.15em}
\def\thinin{\kern-0.15em\in\kern-0.15em}
\def\thintimes{\kern-0.15em\times\kern-0.15em}

\begin{definition}
    A generating automaton $\hat{G}=(\hat{S}, \hat{s}_0, \hat{I}, \hat{q}, \hat{f}, \hat{L})$ is $(\gamma,\zeta)$-ter\-mi\-nating wrt. a concrete generating automaton $G=(S, s_0, I, q, f, L)$ if it is a $(\gamma,\zeta)$-abstraction of $G$ monotone wrt. $(\gamma,\zeta)$-coverage and
\begin{subequations}
\label{eq:completeness}
\begin{align}
    &\forall (s, \hat{s}) \in S \times \hat{S} \; . \; (\gamma(\hat{s}) = \{ s \} \Rightarrow \hat{L}(\hat{s}) = L(s)), \label{eq:completeness_sub_labelling}\\
    &\forall \hat{s} \in \hat{S} \; . \; \forall \hat{i} \in \hat{q}(\hat{s}) \; . \; \exists i \in I \; . \; \zeta(\hat{i}) = \{i\}, \label{eq:completeness_sub_abstract_inputs}\\
    &\forall \hat{s} \in \hat{S} \; . \; \forall i \in I \; . \; \exists \hat{i} \in \hat{q}(\hat{s}) \; . \; \zeta(\hat{i}) = \{i\}, \label{eq:completeness_sub_concrete_inputs}\\
    &\forall (\hat{s}, \hat{i}, s, i) \thinin \hat{S} \thintimes \hat{I} \thintimes S \thintimes I . ((\gamma(\hat{s}), \zeta(\hat{i})) \thineq (\{ s \}, \{i\}) \Rightarrow \gamma(\hat{f}(\hat{s}, \hat{i})) \thineq \{ f(s,i) \}).
    \label{eq:completeness_sub_step}
\end{align}
\end{subequations}
\end{definition}

\begin{lemma}
\label{lemma:strict}
If $\hat{G}$ is $(\gamma,\zeta)$-terminating wrt. $G$, then for every \textmu-calculus property~$\psi$, $\llbracket \psi \rrbracket(\generatepks(\hat{G})) \kern-0.2em = \kern-0.2em \llbracket \psi \rrbracket(\generatepks(G))$. Furthermore, if $\hat{G}$ is a soundness-guaranteeing $(\gamma,\zeta)$-abstraction of $G$ for which some $(\gamma,\zeta)$-monotone refinement is $(\gamma,\zeta)$-terminating wrt. $G$, then $\hat{G}$ itself is $(\gamma,\zeta)$-terminating wrt. $G$ or the refinement is strict.
\end{lemma}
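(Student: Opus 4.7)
I would split the proof into its two assertions. For the first, the plan is to show that the reachable portion of $\generatepks(\hat{G})$ is strongly bisimilar to $\generatepks(G)$, whence the equality follows by bisimulation-invariance of \textmu-calculus (using that satisfaction at the initial state depends only on states reachable from it). For the second, I would argue contrapositively: assuming the refinement is not strict, I would verify each clause of $(\gamma,\zeta)$-termination for $\hat{G}$ directly.

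\textbf{First assertion.} The key preliminary is a routine induction on reachability proving that every abstract state reachable from $\hat{s}_0$ in $\generatepks(\hat{G})$ has a singleton concretization. The base case is \eqref{eq:soundness_sub_initial}; the inductive step uses \eqref{eq:completeness_sub_abstract_inputs} (each qualified input is a singleton) together with \eqref{eq:completeness_sub_step} (a singleton input applied to a singleton state yields a singleton image). Combining this with \eqref{eq:completeness_sub_labelling} shows that the labelling on every reachable abstract state is definite and coincides with $L$. I would then exhibit the relation $B = \{(\hat{s}, s) : \gamma(\hat{s}) = \{s\}\}$ restricted to reachable states as a bisimulation: the forward simulation follows from \eqref{eq:concrete_generation} and \eqref{eq:completeness_sub_step}, while the backward simulation uses \eqref{eq:completeness_sub_concrete_inputs} to find, for each concrete input, a qualified singleton abstract input that witnesses the matching transition. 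Bisimulation-invariance of \textmu-calculus over PKS then closes the case.

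\textbf{Second assertion.} Fix $\hat{G}$ and its monotone refinement $\hat{G}'$ with $\hat{G}'$ terminating, and suppose the refinement is not strict. Negating \eqref{eq:input_tightening} yields that each $\hat{i} \in \hat{q}(\hat{s})$ is covered (by concretization) by some $\hat{i}' \in \hat{q}'(\hat{s})$; combining this with \eqref{eq:monotonicity_sub_old_inputs} and the fact that $\hat{G}'$ satisfies \eqref{eq:completeness_sub_abstract_inputs} sandwiches $\zeta(\hat{i})$ between two singletons, forcing $\zeta(\hat{i})$ itself to be that singleton. Together with \eqref{eq:monotonicity_sub_new_inputs}, this gives that the family $\{\zeta(\hat{i}) : \hat{i} \in \hat{q}(\hat{s})\}$ coincides with $\{\zeta(\hat{i}') : \hat{i}' \in \hat{q}'(\hat{s})\}$. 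Analogously, negating \eqref{eq:step_tightening} and pairing it with \eqref{eq:monotonicity_sub_step_coverage} yields $\gamma(\hat{f}(\hat{s},\hat{i})) = \gamma(\hat{f}'(\hat{s},\hat{i}))$ pointwise. Since $\hat{L}$ is shared between $\hat{G}$ and $\hat{G}'$, each clause of \eqref{eq:completeness} and of \eqref{eq:coverage_monotonicity} then transfers directly from $\hat{G}'$ to $\hat{G}$ by substitution, so $\hat{G}$ is $(\gamma,\zeta)$-terminating.

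\textbf{Main obstacle.} The delicate step is the bisimulation-invariance appeal in the first assertion, since the three-valued PKS semantics has to be handled carefully: one must check that $\x$ never arises on the reachable abstract side where the concrete side is definite. The singleton-concretization induction plus \eqref{eq:completeness_sub_labelling} is exactly what rules this out, but cleanly propagating the bisimulation through the inductive fixpoint definition of $\llbracket \cdot \rrbracket$ on PKS (rather than quoting a black-box theorem for two-valued Kripke structures) is the place where I would be most careful.
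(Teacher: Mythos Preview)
Your proposal is correct; both parts go through. The packaging differs from the paper's in ways worth noting.

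For the first assertion, the paper does not set up a bisimulation or restrict to reachable states. Instead it reuses Lemma~\ref{lemma:lemma} with the relation $H=\{(\hat{s},s):\gamma(\hat{s})=\{s\}\}$ (over \emph{all} singleton-concretization abstract states) to obtain that $\generatepks(G)$ is sound with respect to $\generatepks(\hat{G})$; since $G$ is concrete, $\llbracket\psi\rrbracket(\generatepks(G))\neq\bot$ always, and soundness immediately yields equality. This sidesteps both the reachability induction and the need to argue three-valued bisimulation-invariance from scratch, since modal simulation and Property~\ref{prop:soundness} already encode the required preservation. Your route is equally valid---the reachability induction is exactly what shows the relation is adequate on the part that matters, and your observation that \eqref{eq:completeness_sub_labelling} rules out~$\bot$ on that part reduces the question to the classical two-valued case---but the paper's reuse of its own lemma is shorter. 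A minor imprecision: your forward-simulation clause also needs \eqref{eq:completeness_sub_abstract_inputs}, not just \eqref{eq:completeness_sub_step}.

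For the second assertion, the paper takes the opposite contrapositive: it assumes $\hat{G}$ is \emph{not} terminating and derives strictness by a case split on which clause of \eqref{eq:completeness} fails (Case~(a) is vacuous since $\hat{L}$ is shared; Cases~(b)--(d) force \eqref{eq:input_tightening} or \eqref{eq:step_tightening}). Your direction---assume non-strictness, transfer each termination clause from $\hat{G}'$ to $\hat{G}$---is arguably cleaner, avoids the case split, and explicitly handles the coverage-monotonicity requirement \eqref{eq:coverage_monotonicity}, which the paper's case analysis does not spell out. The ``sandwiching'' for \eqref{eq:completeness_sub_abstract_inputs} is more than needed (the negation of \eqref{eq:input_tightening} alone forces $\zeta(\hat{i})$ into a singleton since $\zeta$ never returns $\emptyset$), but the argument is sound.
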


\begin{sketch}
Show that $\{ (\hat{s}, s) \; | \; \hat{s} \in \hat{S} \land s \in \gamma(\hat{s})\}$ is a modal simulation from $\Gamma(\hat{G})$ to $\Gamma(G)$ due to \eqref{eq:soundness} and \eqref{eq:completeness}. As $\llbracket \phi \rrbracket (\Gamma(G)) \neq \bot$, $\llbracket \psi \rrbracket(\generatepks(\hat{G})) = \llbracket \psi \rrbracket(\generatepks(G))$. Then, consider that if the $\hat{G}$ in the second part is not $(\gamma,\zeta)$-terminating, at least one of \eqref{eq:completeness_sub_labelling}, \eqref{eq:completeness_sub_abstract_inputs}, \eqref{eq:completeness_sub_concrete_inputs}, \eqref{eq:completeness_sub_step} does not hold for $\hat{G}$. Show on a case-by-case basis this is a contradiction for \eqref{eq:completeness_sub_labelling} and forces the refinement to be strict otherwise.
\end{sketch}

\begin{corollary}
\label{cor:strict}
If every generating automaton constructed in Algorithm \ref{alg:framework} fulfils the conditions of Lemma \ref{lemma:strict}, calls to \Call{Refine}{} can always perform a~\emph{strict} $(\gamma,\zeta)$-monotone refinement to a soundness-guaranteeing $(\gamma,\zeta)$-abstraction.
\end{corollary}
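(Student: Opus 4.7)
My plan is to derive the corollary as an immediate consequence of Lemma~\ref{lemma:strict}, essentially by reading off the situation at every entry to a \Call{Refine}{} call. The first observation is that the while-loop of Algorithm~\ref{alg:framework} invokes \Call{Refine}{} only when $\llbracket\phi\rrbracket(\generatepks(\hat{G})) = \x$. Since $G$ is concrete, $\generatepks(G)$ is an ordinary Kripke structure, so $\llbracket\phi\rrbracket(\generatepks(G))\in\{0,1\}$. If the current $\hat{G}$ were already $(\gamma,\zeta)$-terminating wrt.\ $G$, the first part of Lemma~\ref{lemma:strict} would give $\llbracket\phi\rrbracket(\generatepks(\hat{G}))=\llbracket\phi\rrbracket(\generatepks(G))\neq\x$, contradicting the loop guard. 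Hence, whenever \Call{Refine}{} is about to be called, $\hat{G}$ is necessarily \emph{not} $(\gamma,\zeta)$-terminating.

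From the hypothesis of the corollary, $\hat{G}$ meets the remaining preconditions of Lemma~\ref{lemma:strict}: it is a soundness-guaranteeing $(\gamma,\zeta)$-abstraction of $G$ that admits some $(\gamma,\zeta)$-monotone refinement $\hat{G}^\prime$ which is $(\gamma,\zeta)$-terminating wrt.\ $G$. Applying the second part of Lemma~\ref{lemma:strict} to $\hat{G}$ and $\hat{G}^\prime$ and exploiting the fact established above that $\hat{G}$ is not $(\gamma,\zeta)$-terminating, the only remaining disjunct of the lemma's conclusion is that the refinement from $\hat{G}$ to $\hat{G}^\prime$ is \emph{strict}. Consequently, \Call{Refine}{} can always return (some such) $\hat{G}^\prime$, and the transition it performs is by construction a strict $(\gamma,\zeta)$-monotone refinement of $\hat{G}$.

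It remains to confirm that the chosen $\hat{G}^\prime$ is itself a soundness-guaranteeing $(\gamma,\zeta)$-abstraction of $G$, as the corollary's conclusion requires. This is packaged inside the definition of $(\gamma,\zeta)$-terminating, which already demands that $\hat{G}^\prime$ be a $(\gamma,\zeta)$-abstraction of $G$ monotone wrt.\ coverage, together with the singleton-case conditions \eqref{eq:completeness_sub_labelling}--\eqref{eq:completeness_sub_step}; these strictly tighten the soundness requirements \eqref{eq:soundness_sub_initial}--\eqref{eq:soundness_sub_step_coverage} (for instance, \eqref{eq:completeness_sub_concrete_inputs} immediately implies the full-input-coverage condition \eqref{eq:soundness_sub_input_coverage}, and \eqref{eq:labelling_monotonicity} combined with \eqref{eq:completeness_sub_labelling} yields labelling soundness). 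I expect the only mildly delicate aspect of the whole argument to be a clean unfolding of this last implication, but it is pure bookkeeping --- every substantive step has already been performed inside Lemma~\ref{lemma:strict}, so the corollary reduces to reading its two parts in sequence.
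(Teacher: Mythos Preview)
Your argument is correct and is precisely the intended reading of the corollary: the paper does not spell out a separate proof, and the two-step application of Lemma~\ref{lemma:strict} (first part to rule out that the current $\hat{G}$ is terminating because the loop guard holds, second part to force strictness of the refinement to a terminating $\hat{G}'$) is exactly what the corollary packages.

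One small simplification: your final paragraph works harder than necessary. In the definition of $(\gamma,\zeta)$-terminating, the phrase ``is a $(\gamma,\zeta)$-abstraction of $G$'' is shorthand for ``is a soundness-guaranteeing $(\gamma,\zeta)$-abstraction of $G$'' (note that the proof of the first part of Lemma~\ref{lemma:strict} in Appendix~\ref{app:proof} invokes \eqref{eq:soundness_sub_initial} for the terminating automaton, which only makes sense under this reading). So the fact that your $\hat{G}'$ is soundness-guaranteeing is immediate from it being terminating, and you do not need to reassemble \eqref{eq:soundness} from \eqref{eq:completeness} and \eqref{eq:coverage_monotonicity}; indeed, that reassembly would require the extra assumption that every concrete state and input has a singleton abstract representative, which is not available in general.
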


\begin{example}
\label{ex:mck_completeness}
Continuing from Example \ref{ex:mck_monotonicity}, we implemented $\hat{L}, \hat{f}^\textrm{basic}$ so that they fulfil \eqref{eq:completeness_sub_labelling} and \eqref{eq:completeness_sub_step}. Since \eqref{eq:completeness_sub_abstract_inputs} and \eqref{eq:completeness_sub_concrete_inputs} hold due to \eqref{eq:mck_precision_sub_input}, $\hat{G}^*$ obtained by $\hat{p}_{\hat{q}}=(1)^y, \hat{p}_{\hat{f}}=(1)^w$ is $(\gamma,\zeta)$-terminating. Inspecting \eqref{eq:monotonicity}, $\hat{G}^*$ is monotone wrt. all other applicable GA, and Corollary \ref{cor:strict} holds. As soundness was already ensured in Example \ref{ex:mck_soundness} and $\hat{S}$, $\hat{I}$ are finite by definition, Corollary \ref{cor:completeness} holds.
\end{example}
\section{Making Reasonable Refinement Choices}
\label{sec:choices}

A sound, monotone, and complete instantiation of the framework can e.g. refine randomly while fulfilling the requirements from Subsection \ref{subsec:characteristics}. However, for fast verification, refinements must be chosen carefully. We will discuss how reasonable refinements can be made, inspired by the \texttt{FindFailure} algorithms to find the cause of an unknown result of verification of CTL \cite[p.~551-552]{shoham2004} and game-based verification of \textmu-calculus \cite[p.~1145]{grumberg2007}. 

The goal is to find the root cause of an unknown verification result, an unknown atomic labelling in an abstract state reached through the abstract state space, which we will also call the \emph{culprit}. For this, we extend the semantics of three-valued \textmu-calculus on PKS, a simplification of three-valued \textmu-calculus on KMTS~\cite{grumberg2007,shoham2008}, so that it returns the culprit instead of the unknown value $\bot$. For simplicity, we base this definition on the standard syntax of \textmu-calculus in positive normal form. Assuming, in addition to the set of atomic labellings $\mathbb{A}$, a~set of \textmu-calculus variables $\mathbb{V}$, a formula then has the form
\begin{equation}
    \phi ::= a \; | \; \lnot a \; | \; Z \; | \; \phi \land \phi \; | \; \phi \lor \phi \; | \; \langle \phi \rangle \; | \; [\phi] \; | \; \mu Z \; . \; \phi \; | \; \nu Z \: . \phi
\end{equation}
where $a \in \mathbb{A}$, $Z \in \mathbb{V}$. The language of \textmu-calculus then consists of all closed-form formulas given by this definition. We also assume variables are well-bound, i.e. every variable must be bound by $\mu$ or $\nu$ exactly once.

\begin{definition}
Given a GA $\hat{G}=(\hat{S}, \hat{s}_0, \hat{I}, \hat{q}, \hat{f}, \hat{L})$, an \emph{environment} is a function 
$\rho: \mathbb{V} \times \hat{S} \rightarrow \{0, 1\} \cup C$, where $C$ 
is the set of pairs $(\mathcal{S}, a)$ with $\mathcal{S}$ being a non-empty sequence of states from $\hat{S}$ and $a \in \mathbb{A}$ an atomic labelling.

Now let $\hat{R}$ be the transition relation of the PKS~$\Gamma(\hat{G})$. We define the extended semantics of \textmu-calculus recursively according to its syntax, returning an environment as follows:
\begin{align*}
&\llbracket a \rrbracket ::= 
\begin{cases}
\hat{L}(\hat{s}, a) & \text{if } \hat{L}(\hat{s}, a) \neq \bot,\\
((\hat{s}), a) & \text{otherwise,}
\end{cases} \hspace{3em} 
\llbracket \lnot a \rrbracket ::= 
\begin{cases}
1 - \hat{L}(\hat{s}, a) & \text{if } \hat{L}(\hat{s}, a) \neq \bot,\\
((\hat{s}), a) & \text{otherwise,}
\end{cases}\\
&\llbracket \phi \land \psi \rrbracket ::= \begin{cases}
1 & \text{if } \llbracket \phi \rrbracket = 1 \text{ and } \llbracket \psi \rrbracket = 1,\\
0 & \text{if } \llbracket \phi \rrbracket = 0 \text{ or } \llbracket \psi \rrbracket = 0,\\ c\in C \text{ s.t. } c=\llbracket\phi\rrbracket \text{ or } c=\llbracket\psi\rrbracket
& \text{otherwise,}
\end{cases}\\
&\llbracket \langle \phi \rangle \rrbracket::= \begin{cases} 1, \text{ if there is a } \hat{s}^+ \text{ with } (\hat{s},\hat{s}^+) \in \hat{R} \text{ and  } \llbracket \phi \rrbracket_{s^+} = 1,\\
0, \text{ if for all } \hat{s}^+, (\hat{s},\hat{s}^+) \in \hat{R} \text{ implies }\llbracket \phi \rrbracket_{s^+} = 0,\\
((\hat{s}, \hat{s}_1, \dots), a) \text{ s.t. } (\hat{s}, \hat{s}_1)\in\hat{R} \text{ and } \llbracket \phi \rrbracket_{s_1}=((\hat{s}_1, \dots), a),   \text{otherwise,}
\end{cases}\\
&\llbracket Z \rrbracket ::= \rho(Z), \hspace{11.5em} \llbracket \mu Z \; . \; \phi \rrbracket ::= \text{lfp}(\phi).
\end{align*}
The least fixed point lfp is defined as usual for the ordering $0 < \bot < 1$, but retains the element of $C$ previously obtained when the environmental value remains $\bot$ between subsequent iterations. The operators $\phi \lor \phi$, $[\phi]$, and $\nu Z \: . \phi$ are defined correspondingly to their duals.
\end{definition}

As already mentioned, in the case where the result is an element of $C$, we will call it the \emph{culprit}. The definition is not unique, allowing implementation choice of which culprit to use for refinement. With any choice, the culprit describes a path through the state space ending with a labelling that contributes to the verification result $\llbracket \phi \rrbracket_{\hat{s}_0}$ being unknown.

\textbf{Using the culprit for refinement.} If the instance of our framework fulfils the conditions from Corollary \ref{cor:completeness} then there clearly is at least one application of $\hat{f}$ on the culprit's path that makes the subsequent state cover more than one state, since $\gamma(\hat{s}_0) = \{s_0\}$ and \eqref{eq:completeness_sub_labelling} forbids spuriously unknown labellings. As such, we can decide to only refine $\hat{f}$ from the preceding $\hat{s}$ of such applications. This avoids refinements that definitely would not impact any culprit, unnecessarily increasing the size of the abstract state space. This idea can be extended to the instance details, not refining e.g. inputs that provably do not have an impact on the culprit labelling. The choice of culprit and actual refinement can be fine-tuned according to the typical systems under verification in order to achieve good performance.

\begin{example}
\label{ex:choices}
In \mck{}, we model-check and obtain the culprit in the spirit of the above discussion. To ensure deterministic behaviour, we prefer the left culprit for operators $\phi \land \phi, \phi \lor \phi$ and the culprit continuing with the state with the smallest unique identification number for next-state operators $\langle \phi \rangle, [\phi]$.

For better performance, we use incremental model-checking where we construct a history of environment updates and only update a part of the history after refinement, if possible. To avoid storing the elements of $C$, where paths can be problematically long, we instead store the evaluation choices made to obtain the given history point from previous history points and atomic labellings. This allows us to reconstruct the culprit if necessary by walking back through history.

After obtaining the culprit $((\hat{s}_0, \hat{s}_1, \dots, \hat{s}_{n-1}, \hat{s}_n), a)$, we  compute a cone of influence on $a$ being unknown in $\hat{s}_n$. For each state $\hat{s} \in \{\hat{s}_0, \hat{s}_1, \dots, \hat{s}_{n-1}\}$, we determine candidate bit positions $k \in [0, y-1]$ where $\hat{p}_{\hat{q}}(\hat{s})_k = 0$ and it is possible this has contributed to $a$ being unknown in $\hat{s}_n$, and likewise for $k \in [0, w-1]$ where $\hat{p}_{\hat{f}}(\hat{s})_k = 0$. We then choose the state in which to refine $\hat{p}_{f}$ or $\hat{p}_{q}$ deterministically based on the following heuristic:
\begin{itemize}
    \item If there are candidate bits for refinement of $\hat{p}_{f}$, we choose to refine $\hat{p}_{f}$ in the last state on the culprit path where there is some candidate.
    \item Otherwise, we pre-select only the states containing candidates that have the highest level of indirection (indirection rises for indices to array reads and writes that impact the culprit labelling). We then choose to refine $\hat{p}_{q}$ using the candidates on the last pre-selected state on the culprit path.
\end{itemize}
From multiple candidate bits for the refinement in the chosen state, we choose the input variable deterministically. From multiple bit candidates in a bit-vector variable, we refine the most significant bit.

\end{example}

\section{Implementation and Experimental Evaluation}
\label{sec:evaluation}

We implemented an instance of our proposed framework in our free and open-source tool \mck{}. In this section, we discuss our implementation choices and show that our framework is able to mitigate exponential explosion.

\subsection{Implementation}
\label{subsec:implementation}
Our focus is verification of machine-code systems, where the system is composed of a processor and a machine-code program it executes. However, there is no special handling for typical features of machine-code systems such as the Program Counter, call stack, etc., allowing \mck{}  to support verification of arbitrary finite-state digital systems against properties in propositional \textmu-calculus.

\textbf{Systems.} Systems to be verified are described in a subset of the Rust programming language. The systems can be comprised of bit-vectors that support standard arithmetical, logical, shift, and extension operations and bit-vector arrays that support indexing.

\textbf{Properties.} Properties are described in a Rust-style format that allows \textmu-calculus and Computation Tree Logic properties to be expressed. Atomic propositions are formed by equality and comparison operations between variables of the system. CTL and \textmu-calculus can be freely mixed, and CTL operators are converted to \textmu-calculus before model-checking. For example,
the CTL property \textbf{AG}$[\texttt{value} = 0]$ is translated to $\nu Z. (\texttt{value} = 0) \land [Z]$.

\textbf{Translating the systems.} The system description, corresponding to the concrete GA under verification, is compiled together with \mck{} core to form the system verifier, and the description is automatically translated \cite{onderka2024meco} to representations that serve as an abstract simulator and a cone-of-influence calculator. No other interaction between the system and framework is needed.

\textbf{Abstraction.} Three-valued bit-vector abstraction is used for bit-vectors, with fast abstract operations \cite{onderka2022}. Arrays are abstracted using a version of sparse representation where elements that are not stored have the value of the nearest stored element with a lesser index.

\textbf{Building the state space.} The state space is built incrementally using the abstract analogue of the system as the abstract simulator. Abstract state data and a sparse transition graph are retained throughout verification, with garbage collection of abstract states that are no longer reachable.

\textbf{Model-checking.} As noted in Example \ref{ex:choices}, we use incremental model-checking of \textmu-calculus. For reassurance, once the final verification result is obtained, it is double-checked non-incrementally.

\textbf{Verification settings.} With default verification settings, inputs are initially unknown but decay is not used, i.e. $\hat{p}_{\hat{q}}(\hat{s}) = (0)_{k=0}^{y-1}$, $\hat{p}_{\hat{f}}(\hat{s}) = (1)_{k=0}^{w-1}$ from all states $\hat{s} \in \hat{S}$. It is possible to enable decay ubiquitously, i.e. $\hat{p}_{\hat{f}}(\hat{s}) = (0)_{k=0}^{w-1}$, but we found this to be slower for machine-code systems due to the need to refine each generated state. A naïve strategy with $\hat{p}_{\hat{q}}(\hat{s}) = (1)_{k=0}^{y-1}$ corresponding to explicit-state verification without abstraction would immediately result in hopeless explosion of the state space due to the amount of initially uninitialised memory and read inputs in machine code systems.

\subsection{Evaluation Setup}
We evaluated\footnote{Detailed results and reproduction data, including the benchmark set, are available at \url{https://doi.org/10.5281/zenodo.17167265}.} verification of \textmu-calculus properties on machine-code systems for the AVR ATmega328P microcontroller that we described for use in \mck{} according to the datasheet \cite{ATmega328pDatasheet} and instruction set reference \cite{avrinstructionset}.

\textbf{Benchmark set.} As we are not aware of any publicly available and appropriately licensed benchmark sets for formal verification of embedded 8-bit microcontrollers, we used our own set of benchmarks based on machine-code programs for ATmega328P. Our set of benchmarks currently contains 16 programs: 6 programs where we expect a violation of some system-inherent property (e.g. forbidden instruction), 5 simple programs, and 5 more complicated programs, four of which are variations of a simplified version of a program for calibration of a Voltage-Controlled Oscillator used in real life. For C programs, we benchmark the machine code obtained with debug and release builds separately.

\textbf{Evaluation setup.} The evaluation was performed on a Linux machine with an Intel Core i9-12900 processor with 128 GB of RAM. The tool was built with Rust 1.83.0 in release configuration. The default verification strategy was used.

\textbf{Evaluated properties.} We evaluated up to 8 properties on each program\footnote{To ensure the implementation is not faulty, we also test the properties on simple non-machine-code systems in our test suite, e.g. whether the verifier distinguishes between \textbf{AFG}[$w$] and \textbf{AF}[\textbf{AG}[$w$]] using the standard example \cite[p.~65]{Piterman2018}.}. Using $w$ for a propositional formula on atomic properties that varies depending on the program, the properties included, among others:
\begin{itemize}
    \item Inherent property defined in the processor description, determining if a system with given machine-code is permissible, containing no calls to unimplemented instructions, peripherals, etc. \textbf{AG}[$w$], i.e. a safety property.
    \item Recovery property (as discussed in Example \ref{ex:system}), checking whether the system can be recovered from every reachable state to the program loop start with initial output values. \textbf{AG}[\textbf{EF}[$w$]] in CTL. Not a linear-time property.
    \item Program counter (PC) remains within the main program loop in the main function infinitely often on all paths. \textbf{FG}[$w$] in LTL, $\mu X \; . 
    \; \nu Y \; . \; [X] \lor (w \land [Y])$ in \textmu-calculus \cite[p.~3133]{cranen2011}. Not present in CTL.
    \item PC is never at the start of the main loop during even times (wrt. instructions). $\nu X \; . 
    \; w \land [\, [X] \,]$ in \textmu-calculus, not present in CTL, LTL, nor CTL*.
    \item The stack pointer stays above a given value, preventing stack overflow problems. \textbf{AG}[$w$], i.e. a safety property.
\end{itemize}

\subsection{Evaluation Results}
We performed 126 measurements in total on the evaluated machine-code systems and properties, all of which resulted in the property being verified. All results matched our expectations. We will discuss a limited number of the more interesting measurements in detail, demonstrating the applicability of the framework.

\begin{table}[t]
\caption{Selected measurements of verification of the calibration program compiled in debug configuration. Asterisks mark measurements where the inherent property was verified first and verification of the given property progressed from that state space.}
\label{table:measurements_calibration}
\centering
\fontsize{7}{10}\selectfont
{\tabcolsep=0.06cm
\renewcommand{\arraystretch}{1.11}
\begin{tabular}{|c|c|c|c|c|c|c|c|}
\hline
\textbf{Variant}   & \textbf{Property name} & \textbf{Holds} & \textbf{Refin.} & \textbf{States} & \textbf{Transitions} & \textbf{CPU t. {[}s{]}} & \textbf{Mem. {[}MB{]}} \\ \hline
Original    & Inherent           & \cmark & 513 & 13059 & 13573 & 17.09  & 87.39  \\
Original    & Recovery           & \xmark & 513 & 13059 & 13573 & 19.88  & 111.34 \\
Original    & Infinitely often   & \cmark & 513 & 13059 & 13573 & 672.05 & 160.58 \\
Original    & Infinitely often*  & \cmark & 513 & 13059 & 13573 & 17.98  & 115.34 \\
Original    & Even non-starts    & \xmark & 0   & 17    & 18    & $<$0.01& 3.78   \\
Original    & Stack above 0x08FD & \cmark & 513 & 13059 & 13573 & 17.08  & 87.30  \\
Original    & Stack above 0x08FE & \xmark & 0   & 17    & 18    & $<$0.01& 3.94   \\ \hline
Fixed       & Inherent           & \cmark & 513 & 13059 & 13573 & 17.21  & 87.60  \\
Fixed       & Recovery           & \cmark & 513 & 13059 & 13573 & 29.16  & 112.02 \\
Fixed       & Infinitely often   & \cmark & 513 & 13059 & 13573 & 672.19 & 161.65 \\
Fixed       & Infinitely often*  & \cmark & 513 & 13059 & 13573 & 18.07  & 115.66 \\
Fixed       & Even non-starts    & \xmark & 0   & 17    & 18    & $<$0.01& 3.72   \\
Fixed       & Stack above 0x08FD & \cmark & 513 & 13059 & 13573 & 17.2   & 87.56  \\
Fixed       & Stack above 0x08FE & \xmark & 0   & 17    & 18    & $<$0.01& 3.98   \\ \hline
Comp. orig. & Inherent           & \cmark & 771 & 17330 & 18102 & 54.43  & 125.83 \\
Comp. orig. & Recovery           & \xmark & 770 & 17333 & 18104 & 65.28  & 159.20 \\
Comp. orig. & Infinitely often*  & \cmark & 771 & 17330 & 18102 & 73.29  & 173.08 \\ \hline
Comp. fixed & Inherent           & \cmark & 771 & 17330 & 18102 & 51.63  & 125.71 \\
Comp. fixed & Recovery           & \cmark & 771 & 17330 & 18102 & 70.74  & 159.52 \\
Comp. fixed & Infinitely often*  & \cmark & 771 & 17330 & 18102 & 64.2   & 173.04 \\ \hline
\end{tabular}
}
\vspace{-0.5em}
\end{table}

\textbf{Calibration.} We used a simplified version of a program for calibration of a Voltage-Controlled Oscillator (VCO) using binary search. We previously used the program in a real device, where the VCO frequency was adjusted by a digital potentiometer based on the measured frequency of the produced wave. For verification, we replaced the interactions with the digital potentiometer and frequency measure by I/O writes and reads, leaving the core algorithm unchanged.

Verifying using \mck{}, we found a bug in the program using the recovery property: the output value could never recover to zero after an iteration concludes, since the least significant bit was never set to zero during the calibration, causing a calibration precision loss. This bug would have been tricky to find using source-code verification, as the problem occurred in peripheral manipulation. Linear-time properties of forms \textbf{F}[$w$] or \textbf{FG}[$w$] would not find the bug, as the calibration command may never be given. After fixing the bug, the recovery property holds, as seen in Table \ref{table:measurements_calibration}.

Despite only the simple three-valued bit-vector abstraction used, the final sizes of abstract state spaces are reasonable. The refinement guidance is remarkably well-behaved, arriving at the same state space for the properties where a fast decision was not possible, despite the non-inherent properties having no knowledge about e.g. illegal instructions. Verification using the infinitely-often property is notably slow due to the need to model-check with non-trivial nested quantifiers, but this slowdown is effectively mitigated by using the inherent property first.

\textbf{Need for abstraction.} To show Input-based TVAR holds up where explicit-state verification would be completely infeasible, we created complicated calibration variants where a $64$-bit value is read during initialisation, and unrelated 8-bit volatile reads are performed while doing the calibration, ensuring there are more than $2^{80}$ concrete states even if not considering uninitialised memory. As seen in Table \ref{table:measurements_calibration}, \mck{} verifies with at most approx. factor-of-4 slowdown.

\begin{table}
\caption{Selected measurements of verification of the factorial program. Everything is as expected. The recovery property does not hold as the factorial program never outputs zero after the first computation. The infinitely-often property does not hold due to calls to another function from the main function affecting the Program Counter.}
\label{table:measurements_factorial}
\centering
\fontsize{7}{10}\selectfont
{\tabcolsep=0.078cm
\renewcommand{\arraystretch}{1.25}
\begin{tabular}{|c|c|c|c|c|c|c|c|}
\hline
\textbf{Program}   & \textbf{Property name} & \textbf{Holds} & \textbf{Refin.} & \textbf{States} & \textbf{Transitions} & \textbf{CPU t. {[}s{]}} & \textbf{Mem. {[}MB{]}} \\ \hline
Debug   & Inherent           & \cmark & 576 & 51595 & 52940 & 50.4  & 359.79 \\
Debug   & Recovery           & \xmark & 576 & 51595 & 52940 & 76.48 & 469.88 \\
Debug   & Infinitely often   & \xmark & 6   & 1396  & 1411  & 5.55  & 259.89 \\
Debug   & Even non-starts    & \cmark & 576 & 51595 & 52940 & 46.96 & 360.40 \\
Debug   & Stack above 0x08DD & \cmark & 576 & 51595 & 52940 & 52.77 & 359.74 \\
Debug   & Stack above 0x08DE & \xmark & 3   & 748   & 756   & 0.16  & 31.58  \\ \hline
Release & Inherent           & \cmark & 45  & 4272  & 4378  & 0.78  & 46.84  \\
Release & Recovery           & \xmark & 45  & 4272  & 4378  & 1.11  & 54.58  \\
Release & Infinitely often   & \xmark & 6   & 1006  & 1021  & 2.93  & 148.15 \\
Release & Even non-starts    & \xmark & 3   & 550   & 558   & 0.04  & 17.92  \\
Release & Stack above 0x08FB & \cmark & 45  & 4272  & 4378  & 0.78  & 46.79  \\
Release & Stack above 0x08FC & \xmark & 0   & 20    & 21    & $<0.01$& 3.72   \\ \hline
\end{tabular}
}
\end{table}

\textbf{Factorial.} We implemented a program in C which computes the factorial of an input recursively. In the compiled machine code, the computation remained recursive in the debug build but was optimised to an iterative version in the release build. We verified properties including maximal stack sizes for both versions, as seen in Table \ref{table:measurements_factorial}. Regarding the state space size, we noticed that the call return locations remain known even after the stack is popped and they are no longer relevant, unnecessarily growing the state space. We believe a smarter, selective decay of step precision could alleviate these problems.

\textbf{Comparison with other tools.} While there has been a multitude of previous machine-code verifiers \cite[p.~29-37]{onderka2025}, we are not aware of any that support full \textmu-calculus. Our work has been inspired by the research on the tool [mc]square / Arcade.\textmu C \cite{schlich2006,delayed,noll2008,schlich2010,reinbacher2009,Gue14}, which could use strategies including Delayed Nondeterminism \cite{noll2008}. However, the tool has been discontinued and is not publicly available. It only supported proving LTL/ACTL properties when abstraction was used, and needed system-based hints for useful abstraction. As for more recent work, verifiers based on theorem proving have been introduced, with the Serval~\cite{nelson2019,serval} tool supporting automatic verification but not programs with loops, and Islaris~\cite{sammler2022,islaris} requiring manual proof support but supporting loops. Both only support verification of safety properties.  In contrast, \mck{} supports fully automatic verification of \textmu-calculus properties on machine-code programs that include loops thanks to the introduced framework.

\section{Conclusion}
\label{sec:conclusion}
We presented a novel input-based Three-Valued Abstraction Refinement (TVAR) framework for formal verification of \textmu-calculus properties using abstraction refinement and gave requirements for soundness, monotonicity, and completeness. Our framework can verify properties not verifiable using Counterexample-guided Abstraction Refinement or (Generalized) Symbolic Trajectory Evaluation, eliminating verification blind spots. Compared to previous TVAR frameworks, our framework does not use modal transitions, allowing monotonicity without formalism complications. We implemented the framework in our tool \mck{}, which can verify machine-code systems while mitigating exponential explosion.

\bibliographystyle{splncs04}
\bibliography{IEEEabrv,ref}

\begin{subappendices}
\renewcommand{\thesection}{\Alph{section}}

\newpage
\section{Proofs of Soundness, Monotonicity, and Completeness}
\label{app:proofs}

\def\coarse{\uparrow}
\def\fine{\downarrow}
\def\kerndot{\kern-0.3em}
\def\kerncomma{\kern-0.25em}

In this appendix, we will prove Theorems~\ref{th:soundness}, \ref{th:monotonicity}, and \ref{th:completeness}, and Lemma \ref{lemma:strict}. Since directly proving on \textmu-calculus is cumbersome, we will typically prove the theorems by showing their requirements imply that some PKS is sound wrt. another PKS as per Definition~\ref{def:abstraction_soundness} using a well-known property of modal simulation~\cite[p.~408-410]{dams2018}, which we simplified from KMTS to PKS by setting $R=R_{may}=R_{must}$.

\begin{definition}
\label{def:modal_simulation}
Let $K^\fine=(S^\fine, S_0^\fine, R^\fine, L^\fine)$ and $K^\coarse=(S^\coarse, S_0^\coarse, R^\coarse, L^\coarse)$ be PKS over the set of atomic propositions $\mathcal{A}$. Then $H \subseteq S^\fine \times S^\coarse$ is a modal simulation from $K^\fine$ to $K^\coarse$ if and only if all of the following hold,

\begingroup
\thinmuskip=0.75\thinmuskip
\medmuskip=0.75\medmuskip
\thickmuskip=0.75\thickmuskip
\begin{subequations}
\label{eq:modal_simulation}
\begin{align}
    &\forall (s^\fine\kerncomma, s^\coarse) \in H \, . \, \forall a \in \mathcal{A} \; . \; (L^\coarse(s^\coarse, a) \neq \bot \Rightarrow L^\fine(s^\fine, a) = L^\coarse(s^\coarse, a)), \label{eq:modal_simulation_sub_labelling}\\
    &\forall (s^\fine\kerncomma, s^\coarse) \in H \, . \, \forall s'^\fine \in S^\fine\kerndot \,. \, (R^\fine(s^\fine\kerncomma, s'^\fine) \Rightarrow \exists s'^\coarse \in S^\coarse\kerndot \,. \, (R^\coarse(s^\coarse\kerncomma, s'^\coarse) \land H(s'^\fine\kerncomma, s'^\coarse))), \label{eq:modal_simulation_sub_fine_to_coarse}\\
    &\forall (s^\fine\kerncomma, s^\coarse) \in H \, . \, \forall s'^\coarse \in S^\coarse\kerndot \,. \, (R^\coarse(s^\coarse\kerncomma, s'^\coarse) \Rightarrow \exists s'^\fine \in S^\fine\kerndot \,.\, (R^\fine(s^\fine\kerncomma, s'^\fine) \land H(s'^\fine\kerncomma, s'^\coarse))). \label{eq:modal_simulation_sub_coarse_to_fine}
\end{align}
\end{subequations}
\endgroup
\end{definition}

Informally, $H$ relates the states of $K^\fine$ and $K^\coarse$ so that the states in the fine $K^\fine$ preserve all known labellings of their related states from the coarse $K^\coarse$, and transitions are respected: for every pair of related states in $H$, each transition from one element of the pair must correspond to at least one transition from the other element with the endpoints related by $H$.

\begin{definition}
\label{def:pks_modal_simulation}
    A PKS $K^\fine$ with initial state $s_0^\fine$ is modal-simulated by a PKS~$K^\coarse$ with initial state $s_0^\coarse$, denoted $K^\fine \preceq K^\coarse$, if there is a modal simulation~$H$ from $K^\fine$ to $K^\coarse$ and furthermore,
\begin{equation}
\label{eq:pks_modal_simulation}
    \forall s_0^\fine \in S_0^\fine \; . \; \exists s_0^\coarse \in S_0^\coarse \; . \; H(s_0^\fine, s_0^\coarse).
\end{equation}
\end{definition}

\begin{property}
\label{prop:soundness}
    $K^\coarse$ is sound wrt. $K^\fine$  if $K^\fine \preceq K^\coarse$~\cite[p.~410]{dams2018} (original argument by Huth et al. \cite[p.~161]{huth2001}).
\end{property}

To simplify the proofs further, we first prove a criterion for generating automata that ensures their corresponding Kripke structures are sound.

\begin{lemma}
\label{lemma:lemma}
 The PKS~$\MakePKS{S^\coarse, s_0^\coarse, q^\coarse, f^\coarse, L^\coarse}$ is sound wrt. $\MakePKS{S^\fine, s_0^\fine, q^\fine, f^\fine, L^\fine}$  if there exists a relation $H \subseteq S^\fine \times S^\coarse$ where
\begin{subequations}
\label{eq:lemma}
\begin{align}
    &(s_0^\fine, s_0^\coarse) \in H, \label{eq:lemma_sub_initial}\\
    &\forall (s^\fine, s^\coarse) \in H \; . \; \forall a \in \mathcal{A} \; . \; (L^\coarse(s^\coarse, a) \neq \bot \Rightarrow L^\fine(s^\fine, a) = L^\coarse(s^\coarse, a)), \label{eq:lemma_sub_labelling}\\
    &\forall (s^\fine, s^\coarse) \in H \; . \; \forall i^\fine \in q^\fine(s^\fine) \; . \; \exists i^\coarse \in q^\coarse(s^\coarse) \; . \; H(f^\fine(s^\fine, i^\fine), f^\coarse(s^\coarse, i^\coarse)), \label{eq:lemma_sub_fine_to_coarse}\\
    &\forall (s^\fine, s^\coarse) \in H \; . \; \forall i^\coarse \in q^\coarse(s^\coarse) \; . \; \exists i^\fine \in q^\fine(s^\fine) \; . \; H(f^\fine(s^\fine, i^\fine), f^\coarse(s^\coarse, i^\coarse)). \label{eq:lemma_sub_coarse_to_fine}
\end{align}
\end{subequations}
\end{lemma}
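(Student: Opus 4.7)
The plan is to invoke Property~\ref{prop:soundness}: to conclude that $K^\coarse \defeq \MakePKS{S^\coarse, s_0^\coarse, q^\coarse, f^\coarse, L^\coarse}$ is sound with respect to $K^\fine \defeq \MakePKS{S^\fine, s_0^\fine, q^\fine, f^\fine, L^\fine}$, it suffices to exhibit a modal simulation from $K^\fine$ to $K^\coarse$ in the sense of Definition~\ref{def:modal_simulation} and verify the initial-state condition~\eqref{eq:pks_modal_simulation}. My candidate will be the very same relation~$H$ supplied in the hypothesis; the task then reduces to checking that the four clauses of~\eqref{eq:lemma} imply the three clauses of~\eqref{eq:modal_simulation} plus~\eqref{eq:pks_modal_simulation}.

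First I would unfold Definition~\ref{def:generatepks} on both sides: for $\ast \in \{\fine, \coarse\}$, the generated transition relation is $R^\ast = \{(s, f^\ast(s, i)) \mid s \in S^\ast, i \in q^\ast(s)\}$, and the sole initial state is $s_0^\ast$. With this unfolding, the initial-state condition~\eqref{eq:pks_modal_simulation} reduces to $(s_0^\fine, s_0^\coarse) \in H$, which is exactly~\eqref{eq:lemma_sub_initial}; and the labelling condition~\eqref{eq:modal_simulation_sub_labelling} is syntactically identical to~\eqref{eq:lemma_sub_labelling}.

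The remaining two conditions are the interesting ones. For~\eqref{eq:modal_simulation_sub_fine_to_coarse}, fix $(s^\fine, s^\coarse) \in H$ and $s'^\fine \in S^\fine$ with $R^\fine(s^\fine, s'^\fine)$. By the construction of $R^\fine$ there exists $i^\fine \in q^\fine(s^\fine)$ such that $s'^\fine = f^\fine(s^\fine, i^\fine)$. Applying~\eqref{eq:lemma_sub_fine_to_coarse} to this $i^\fine$ yields $i^\coarse \in q^\coarse(s^\coarse)$ with $H(f^\fine(s^\fine, i^\fine), f^\coarse(s^\coarse, i^\coarse))$; setting $s'^\coarse \defeq f^\coarse(s^\coarse, i^\coarse)$ gives $R^\coarse(s^\coarse, s'^\coarse)$ together with $H(s'^\fine, s'^\coarse)$, as required. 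Clause~\eqref{eq:modal_simulation_sub_coarse_to_fine} is handled symmetrically using~\eqref{eq:lemma_sub_coarse_to_fine}: pick the input $i^\coarse$ that witnesses the coarse transition, obtain a matching $i^\fine$, and take $s'^\fine \defeq f^\fine(s^\fine, i^\fine)$.

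The proof is essentially mechanical once the transition relation of $\generatepks(\cdot)$ is unfolded; I do not expect any real obstacle. The only spot requiring mild care is making sure that when I recover an input $i^\fine$ (resp.\ $i^\coarse$) from a transition, I preserve the identity $s'^\fine = f^\fine(s^\fine, i^\fine)$ throughout, so that the hypothesis clauses~\eqref{eq:lemma_sub_fine_to_coarse} and~\eqref{eq:lemma_sub_coarse_to_fine}, which speak about $f$-images rather than about transitions directly, can be applied without slippage. After these four checks, concluding $K^\fine \preceq K^\coarse$ and hence soundness of $K^\coarse$ with respect to $K^\fine$ via Property~\ref{prop:soundness} closes the argument.
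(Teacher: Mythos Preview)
Your proposal is correct and follows essentially the same approach as the paper: both take the hypothesis relation $H$ itself as the modal simulation, verify~\eqref{eq:pks_modal_simulation} and~\eqref{eq:modal_simulation_sub_labelling} immediately from~\eqref{eq:lemma_sub_initial} and~\eqref{eq:lemma_sub_labelling}, and derive the two transition clauses by unfolding $R^\ast$ via Definition~\ref{def:generatepks} and matching inputs through~\eqref{eq:lemma_sub_fine_to_coarse} and~\eqref{eq:lemma_sub_coarse_to_fine}, then invoke Property~\ref{prop:soundness}. The only cosmetic difference is that the paper carries out the reduction as a sequence of syntactic rewrites of the quantified formulas, whereas you do the equivalent element-chasing argument directly.
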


\begin{proof}
Let $K^\coarse=\MakePKS{S^\coarse, s_0^\coarse, q^\coarse, f^\coarse, L^\coarse}$ and $K^\fine=\MakePKS{S^\fine, s_0^\fine, q^\fine, f^\fine, L^\fine}$.
We assume a relation $H \subseteq S^\fine \times S^\coarse$ satisfying \eqref{eq:lemma_sub_initial} to \eqref{eq:lemma_sub_coarse_to_fine} and use it to prove $K^\fine \preceq K^\coarse$ which implies soundness due to Property~\ref{prop:soundness}. Due to \eqref{eq:lemma_sub_initial}, \eqref{eq:pks_modal_simulation} holds, and it remains to prove \eqref{eq:modal_simulation}. \eqref{eq:modal_simulation_sub_labelling} directly follows from \eqref{eq:lemma_sub_labelling}. To prove \eqref{eq:modal_simulation_sub_fine_to_coarse} and \eqref{eq:modal_simulation_sub_coarse_to_fine}, we respectively expand $R^\fine$ and $R^\coarse$ according to Definition~\ref{def:generatepks} to
\begin{subequations}
\label{eq:lemma_1}
\begin{align}
    &\forall (s^\fine, s^\coarse) \in H \; . \; \forall s'^\fine \in S^\fine \; . \;  \label{eq:lemma_1_sub_fine_to_coarse}\\
    &\hspace{0.25cm} ((\exists i^\fine \in q^\fine(s^\fine) \; . \;  s'^\fine = f^\fine(s^\fine, i^\fine)) \Rightarrow \exists s'^\coarse \in S^\coarse \; . \; (R^\coarse(s^\coarse, s'^\coarse) \land H(s'^\fine, s'^\coarse))),\notag\\
    &\forall (s^\fine, s^\coarse) \in H \; . \; \forall s'^\coarse \in S^\coarse \; . \; \label{eq:lemma_1_sub_coarse_to_fine}\\
    &\hspace{0.25cm} ((\exists i^\coarse \in q^\coarse(s^\coarse) \; . \;  s'^\coarse = f^\coarse(s^\coarse, i^\coarse)) \Rightarrow \exists s'^\fine \in S^\fine \; . \; (R^\fine(s^\fine, s'^\fine) \land H(s'^\fine, s'^\coarse))).\notag
\end{align}
\end{subequations}
We move the $i^\fine, i^\coarse$ quantifiers out of the implication, negating them due to moving out of antecedent. $s'^\fine$ and $s'^\coarse$ must be equal to $f^\fine(s^\fine, i^\fine)$ and $f^\coarse(s^\coarse, i^\coarse)$, respectively, so we replace them and eliminate the quantifiers, obtaining
\begin{subequations}
\label{eq:lemma_2}
\begin{align}
    &\forall (s^\fine, s^\coarse) \in H . \; \forall i^\fine \in q^\fine(s^\fine) \; . \; \exists s'^\coarse \in S^\coarse \; . \; (R^\coarse(s^\coarse, s'^\coarse) \land H(f^\fine(s^\fine, i^\fine), s'^\coarse)),\\
    &\forall (s^\fine, s^\coarse) \in H . \; \forall i^\coarse \in q^\coarse(s^\coarse) \; . \; \exists s'^\fine \in S^\fine \; . \; (R^\fine(s^\fine, s'^\fine) \land H(s'^\fine, f^\coarse(s^\coarse, i^\coarse))).
\end{align}
\end{subequations}
We then respectively insert the definition of $R^\coarse$ and $R^\fine$, pull the $i^\coarse, i^\fine$ quantifiers outside, and eliminate the $s'^\coarse, s'^\fine$ variables, obtaining
\begin{subequations}
\label{eq:lemma_3}
\begin{align}
    &\forall (s^\fine, s^\coarse) \in H \; . \; \forall i^\fine \in q^\fine(s^\fine) \; . \; \exists i^\coarse \in q^\coarse(s^\coarse) \; . \; H(f^\fine(s^\fine, i^\fine), f^\coarse(s^\coarse, i^\coarse)),\\
    &\forall (s^\fine, s^\coarse) \in H \; . \; \forall i^\coarse \in q^\coarse(s^\coarse) \; . \; \exists i^\fine \in q^\fine(s^\fine) \; . \; H(f^\fine(s^\fine, i^\fine), f^\coarse(s^\coarse, i^\coarse)),
\end{align}
\end{subequations}
which correspond to the assumed \eqref{eq:lemma_sub_fine_to_coarse} and \eqref{eq:lemma_sub_coarse_to_fine}. \qed

\end{proof}

\subsection{Proof of Soundness}
\label{app_subsec:soundness}

\begin{proof}[Theorem~\ref{th:soundness}]
Assume generating automata~$\hat{G}=(\hat{S}, \hat{s}_0, \hat{I}, \hat{q}, \hat{f}, \hat{L})$ and~$G=(S, s_0, I, q, f, L)$, state concretization function~$\gamma$, and input concretization function~$\zeta$ such that 
 $\hat{G}$ is a soundness-guaranteeing $(\gamma,\zeta)$-abstraction of $G$.
Our goal is to prove that $\MakePKS{\hat{G}}$ is sound wrt. $\MakePKS{G}$.  For this, we prove that 
\begin{equation}
\label{eq:soundness_modal_simulation}
    H = \{ (s, \hat{s}) \in S \times \hat{S} \; | \; s \in \gamma(\hat{s}) \}
\end{equation}
satisfies the conditions of Lemma~\ref{lemma:lemma} which implies that $\MakePKS{\hat{G}}$ is sound wrt. $\MakePKS{G}$. 
Conditions~\eqref{eq:lemma_sub_initial} and \eqref{eq:lemma_sub_labelling} hold due to \eqref{eq:soundness_sub_initial} and \eqref{eq:soundness_sub_labelling}, respectively. We use the definition of $H$ and the fact that $G$ is concrete to rewrite \eqref{eq:lemma_sub_fine_to_coarse} and \eqref{eq:lemma_sub_coarse_to_fine} to
\begin{subequations}
\label{eq:soundness_1}
\begin{align}
    &\forall \hat{s} \in \hat{S} \; . \; \forall s \in \gamma(\hat{s}) \; . \; \forall i \in I \; . \; \exists \hat{i} \in \hat{q}(\hat{s}) \; . \; f(s, i) \in \gamma(\hat{f}(\hat{s}, \hat{i})), \label{eq:soundness_1_sub_fine_to_coarse}\\
    &\forall \hat{s} \in \hat{S} \; . \; \forall s \in \gamma(\hat{s}) \; . \; \forall \hat{i} \in \hat{q}(\hat{s}) \; . \; \exists i \in I \; . \; f(s, i) \in \gamma(\hat{f}(\hat{s}, \hat{i})). \label{eq:soundness_1_sub_coarse_to_fine}
\end{align}
\end{subequations}

Informally, from each concrete state covered by an abstract state, \eqref{eq:soundness_1_sub_fine_to_coarse} requires that each concrete step result is covered by some abstract step result, and \eqref{eq:soundness_1_sub_coarse_to_fine} requires that each abstract step result covers some concrete step result.

 To prove \eqref{eq:soundness_1_sub_fine_to_coarse}, we assume $\hat{s} \in \hat{S}, s \in \gamma(\hat{s}), i \in I$ to be arbitrary but fixed. From \eqref{eq:soundness_sub_input_coverage}, we know that we can choose some $\hat{i} \in \hat{q}(\hat{s})$ for which $i \in \zeta(\hat{i})$.

To prove \eqref{eq:soundness_1_sub_coarse_to_fine}, we assume $\hat{s} \in \hat{S}, s \in \gamma(\hat{s}), \hat{i} \in \hat{q}(\hat{s})$ to be arbitrary but fixed. As $\zeta: \hat{I} \rightarrow 2^I \setminus \{\emptyset\}$, we can always choose some $i \in \zeta(\hat{i})$. 

In both situations, our assumptions include $s \in S, \hat{s} \in \gamma(\hat{s}), \hat{i} \in \hat{q}(\hat{s}), i \in \zeta(\hat{i})$, and it remains to prove $f(s, i) \in \gamma(\hat{f}(\hat{s}, \hat{i}))$. This follows from \eqref{eq:soundness_sub_step_coverage}.\qed
\end{proof}

\subsection{Proof of Monotonicity}
\label{app_subsec:monotonicity}

\begin{proof}[Theorem~\ref{th:monotonicity}]
We assume that $\hat{G}'=(\hat{S}, \hat{s}_0, \hat{I}, \hat{q}', \hat{f}', \hat{L})$ is a $(\gamma,\zeta)$-monotone refinement of the generating automaton $\hat{G}=(\hat{S}, \hat{s}_0, \hat{I}, \hat{q}, \hat{f}, \hat{L})$. We prove that for every \textmu-calculus property $\psi$ for which $\llbracket \psi \rrbracket(\MakePKS{\hat{G}}) \neq \bot$, also $\llbracket \psi \rrbracket(\MakePKS{\hat{G}'}) \neq \bot$. For this, it suffices to prove that $\MakePKS{\hat{G}}$ is sound wrt. $\MakePKS{\hat{G}'}$. Using
\begin{equation}
    H \defeq \{ (\hat{s}\rnew, \hat{s}\rold) \in \hat{S} \times \hat{S} \; | \; \gamma(\hat{s}\rnew) \subseteq \gamma(\hat{s}\rold) \},
\end{equation}
we will prove that \eqref{eq:lemma} holds for $G^\fine \defeq \hat{G}\rnew$, $G^\coarse \defeq \hat{G}\rold$. Formula~\eqref{eq:lemma_sub_initial} holds trivially, and~\eqref{eq:lemma_sub_labelling} holds due to \eqref{eq:labelling_monotonicity}. Assuming that $\hat{s}\rnew \in \hat{S}, \hat{s}\rold \in \hat{S}$ are arbitrary but fixed and $\gamma(\hat{s}\rnew) \subseteq \gamma(\hat{s}\rold)$ holds, we rewrite \eqref{eq:lemma_sub_fine_to_coarse} and \eqref{eq:lemma_sub_coarse_to_fine} to
\begin{subequations}
\label{eq:monotonicity_1}
\begin{align}
    & \forall \hat{i}\rnew \in \hat{q}\rnew(\hat{s}\rnew) \; . \; \exists \hat{i}\rold \in \hat{q}\rold(\hat{s}\rold) \; . \; \gamma(\hat{f}\rnew(\hat{s}\rnew, \hat{i}\rnew)) \subseteq \gamma(\hat{f}\rold(\hat{s}\rold, \hat{i}\rold)),\label{eq:monotonicity_a}\\
    &\forall \hat{i}\rold \in \hat{q}\rold(\hat{s}\rold) \; . \; \exists \hat{i}\rnew \in \hat{q}\rnew(\hat{s}\rnew) \; . \; \gamma(\hat{f}\rnew(\hat{s}\rnew, \hat{i}\rnew)) \subseteq \gamma(\hat{f}\rold(\hat{s}\rold, \hat{i}\rold)).\label{eq:monotonicity_b}
\end{align}
\end{subequations}

To prove~\eqref{eq:monotonicity_a}, we assume $\hat{i}\rnew \in \hat{q}\rnew(\hat{s}\rnew)$ arbitrary but fixed and choose $\hat{i}\rold \in \hat{q}\rold(\hat{s}\rold)$ for which $\zeta(\hat{i}\rnew) \subseteq \zeta(\hat{i}\rold)$, which exists due to \eqref{eq:monotonicity_sub_new_inputs}.

To prove~\eqref{eq:monotonicity_a}, we assume $\hat{i}\rold \in \hat{q}\rold(\hat{s}\rold)$ arbitrary but fixed and choose $\hat{i}\rnew \in \hat{q}\rnew(\hat{s}\rnew)$ for which $\zeta(\hat{i}\rnew) \subseteq \zeta(\hat{i}\rold)$, which exists due to \eqref{eq:monotonicity_sub_old_inputs}.

In both situations, it remains to prove $\gamma(\hat{f}\rnew(\hat{s}\rnew, \hat{i}\rnew)) \subseteq \gamma(\hat{f}\rold(\hat{s}\rold, \hat{i}\rold))$. From the assumed \eqref{eq:monotonicity_sub_step_coverage}, we strengthen this to $\gamma(\hat{f}\rold(\hat{s}\rnew, \hat{i}\rnew)) \subseteq \gamma(\hat{f}\rold(\hat{s}\rold, \hat{i}\rold))$. This follows from \eqref{eq:step_monotonicity}, given our other assumptions about $\hat{s}\rold, \hat{s}\rnew, \hat{i}\rold, \hat{i}\rnew$.\qed

\end{proof}
\subsection{Proof of Completeness}
\label{app_subsec:completeness}

\begin{proof}[Theorem~\ref{th:completeness}] We assume a sequence satisfying the preconditions of the theorem and show that it cannot be infinite. For each $\hat{G}$ in the sequence, we define a function $M_{\hat{G}}(\hat{s}) = (Q_{\hat{G}}(\hat{s}), F_{\hat{G}}(\hat{s}))$ where
\begin{equation}
\begin{split}
&Q_{\hat{G}}(\hat{s}) = \{ \zeta(\hat{i}) \; | \;  \hat{i} \in \hat{q}_{\hat{G}}(\hat{s}) \}, \; F_{\hat{G}}(\hat{s}) = \{ (\hat{i} \in \hat{I}, \gamma(\hat{f}_{\hat{G}}(\hat{s}, \hat{i}))) \},
\end{split}
\end{equation}
and will prove it cannot be equal for two different elements of the sequence named $\hat{G}$ and $\hat{G^*}$. We name the element after $\hat{G}$ as $\hat{G}'$, it is a strict ($\gamma$, $\zeta$)-monotone refinement of $\hat{G}$. $\hat{G^*}$ is a ($\gamma$, $\zeta$)-monotone refinement of both $\hat{G}$ and $\hat{G}'$ as Definition \ref{def:monotone} is clearly transitive and reflexive.

\textbf{Case 1.} The strict refinement is due to~\eqref{eq:input_tightening}. For some $\hat{s} \in \hat{S}$, there is an $\hat{i} \in \hat{q}(\hat{s})$ not covered by any $\hat{i}' \in \hat{q}(\hat{s})$, so $\zeta(\hat{i}) \in Q_{\hat{G}}(\hat{s})$ but $\zeta(\hat{i}) \not\in Q_{\hat{G}'}(\hat{s})$. Due to \eqref{eq:monotonicity_sub_new_inputs} from $\hat{G}'$ to $\hat{G}^*$, $\zeta(\hat{i}) \not\in Q_{\hat{G}^*}(\hat{s})$, so $M_{\hat{G}} \neq M_{\hat{G}^*}$.

\textbf{Case 2.} The strict refinement is due to~\eqref{eq:step_tightening}. There is some pair $(\hat{s}, \hat{i}) \in \hat{S} \times \hat{I}$ where for some $s \in \gamma(\hat{f}(\hat{s}, \hat{i}))$, $s \not\in \gamma(\hat{f}'(\hat{s}, \hat{i}))$. As such, $s \in F_{\hat{G}}(\hat{s})(\hat{i})$ but $s \not\in F_{\hat{G}'}(\hat{s})(\hat{i})$. Due to \eqref{eq:monotonicity_sub_step_coverage} from $\hat{G}'$ to $\hat{G}^*$, $s \not\in F_{\hat{G}^*}(\hat{s})(\hat{i})$, so $M_{\hat{G}} \neq M_{\hat{G}^*}$.

For each $\hat{s} \in \hat{S}$, $\hat{Q}(\hat{s})$ has less than $2^{|\hat{I}|}$ valuations. For each pair $(\hat{s}, \hat{i}) \in \hat{S} \times \hat{I}$, $F(\hat{s})(\hat{i})$ has less than $2^{|\hat{S}|}$ valuations. As such, there are less than $|S| (2^{|\hat{I}|} + |I| 2^{|\hat{S}|})$ valuations of $M$. Since we assume $\hat{S}, \hat{I}$ finite, this completes the proof.\qed

\end{proof}

\subsection{Proof of Lemma \ref{lemma:strict}}

\begin{proof}[Lemma \ref{lemma:strict}]

\textbf{First part.}
We will first prove the claim that if $\hat{G}$ is $(\gamma,\zeta)$-terminating wrt. $G$, then for every \textmu-calculus property $\psi$, $\llbracket \psi \rrbracket(\generatepks(\hat{G})) = \llbracket \psi \rrbracket(\generatepks(G))$. Since $G$ is a concrete generating automaton, $\llbracket \psi \rrbracket(\generatepks(G)) \neq \bot$, and it suffices to prove $\generatepks(G)$ is sound wrt. $\generatepks(\hat{G})$. Using Lemma \ref{lemma:lemma}, we define $H$ as
\begin{equation}
    H \defeq \{(\hat{s}, s) \; | \; \gamma(\hat{s}) = \{ s \}\}.
\end{equation}

\begingroup
\thinmuskip=0.6\thinmuskip
\medmuskip=0.6\medmuskip
\thickmuskip=0.6\thickmuskip
\eqref{eq:lemma_sub_initial} and \eqref{eq:lemma_sub_labelling} hold due to \eqref{eq:soundness_sub_initial} and \eqref{eq:completeness_sub_labelling}, respectively. We rewrite \eqref{eq:lemma_sub_fine_to_coarse} and \eqref{eq:lemma_sub_coarse_to_fine} as
\begin{subequations}
\label{eq:terminating_modal}
\begin{align}
    &\forall (\hat{s}, s) \in \hat{S} \times S \; . \; (\gamma(\hat{s}) = \{ s \} \Rightarrow \forall \hat{i} \in \hat{q}(\hat{s}) \; . \; \exists i \in I \; . \; \gamma(\hat{f}(\hat{s}, \hat{i})) = \{ f(s, i) \}), \label{eq:terminating_modal_sub_abstract_inputs}\\
    &\forall (\hat{s}, s) \in \hat{S} \times S \; . \; (\gamma(\hat{s}) = \{ s \} \Rightarrow \forall i \in I \; . \; \exists \hat{i} \in \hat{q}(\hat{s}) \; . \; \gamma(\hat{f}(\hat{s}, \hat{i})) = \{ f(s, i) \}). \label{eq:terminating_modal_sub_concrete_inputs}
\end{align}
\end{subequations}
\endgroup
We assume $(\hat{s}, s) \in \hat{S} \times S$ arbitrary but fixed and $\gamma(\hat{s}) = \{ s \}$ to hold.

To prove \eqref{eq:terminating_modal_sub_abstract_inputs}, we assume $\hat{i} \in \hat{q}(\hat{s})$ arbitrary but fixed and choose $i \in I$ for which $\zeta(\hat{i}) = \{ i \}$, which exists due to \eqref{eq:completeness_sub_abstract_inputs}.

To prove \eqref{eq:terminating_modal_sub_concrete_inputs}, we assume $i \in I$ arbitrary but fixed and choose $\hat{i} \in \hat{q}(\hat{s})$ for which $\zeta(\hat{i}) = \{ i \}$, which exists due to \eqref{eq:completeness_sub_concrete_inputs}.

\textbf{Second part.} We turn to the claim that if $\hat{G}$ is a soundness-guaranteeing $(\gamma,\zeta)$-abstraction of $G$ for which some $(\gamma,\zeta)$-monotone refinement $\hat{G}^*$ is $(\gamma,\zeta)$-terminating wrt. $G$, then either $\hat{G}$ itself is $(\gamma,\zeta)$-terminating wrt. $G$ or the refinement is strict. We will prove this by assuming $\hat{G}$ is not $(\gamma,\zeta)$-terminating and proving that $\hat{G}^*$ is a strict $(\gamma,\zeta)$-monotone refinement of $\hat{G}$.

As per Definition \ref{def:strict_monotone}, we are to prove either \eqref{eq:input_tightening} or \eqref{eq:step_tightening} holds for the refinement from $\hat{G}$ to $\hat{G}^*$. Since $\hat{G}$ is not $(\gamma,\zeta)$-terminating, at least one of \eqref{eq:completeness_sub_labelling},  \eqref{eq:completeness_sub_abstract_inputs}, \eqref{eq:completeness_sub_concrete_inputs}, \eqref{eq:completeness_sub_step} does not hold.

\textbf{Case (a).} \eqref{eq:completeness_sub_labelling} does not hold for $\hat{G}$. However, $\hat{S}$, $\hat{I}$, and $\hat{L}$ are the same for $\hat{G}^*$, and we already assumed \eqref{eq:completeness_sub_labelling} for $\hat{G}^*$. This is a contradiction and so this case cannot occur.

\textbf{Case (b).} \eqref{eq:completeness_sub_abstract_inputs} does not hold for $\hat{G}$, but we have assumed it holds for $\hat{G}^*$, i.e.
\begin{subequations}
\begin{align}
    &\exists \hat{s} \in \hat{S} \; . \; \exists \hat{i} \in \hat{q}(\hat{s}) \; . \; \forall i \in I \; . \; \zeta(\hat{i}) \neq \{i\}. \label{eq:case_b_sub_nonterminating}\\
    &\forall \hat{s} \in \hat{S} \; . \; \forall \hat{i}^* \in \hat{q}^*(\hat{s}) \; . \; \exists i^* \in I \; . \; \zeta(\hat{i}^*) = \{i^*\}. \label{eq:case_b_sub_terminating}
\end{align}
\end{subequations}
We will prove that \eqref{eq:input_tightening} holds for the refinement from $\hat{G}$ to $\hat{G}^*$, i.e.
\begin{equation}
\begin{split}
    &\exists \hat{s} \in \hat{S} \; . \; \exists \hat{i} \in \hat{q}(\hat{s}) 
\; . \; \forall \hat{i}' \in \hat{q}^*(\hat{s}) \; . \; \exists i \in \zeta(\hat{i}) \; . \; i \not\in \zeta(\hat{i}'). \label{eq:input_tightening_again}
\end{split}
\end{equation}
We fix $\hat{s}$, $\hat{i}$ from \eqref{eq:case_b_sub_nonterminating} and choose them in \eqref{eq:input_tightening_again}. For all $\hat{i}' \in \hat{q}^*(\hat{s})$, clearly $|\gamma(\hat{i}')| = 1$ due to \eqref{eq:case_b_sub_terminating}. However, due to \eqref{eq:case_b_sub_nonterminating} and $\zeta$ not returning $\emptyset$, $|\gamma(\hat{i})| > 1$. As such, for any $\hat{i}' \in \hat{q}^*(\hat{s})$, there exists $i \in \zeta(\hat{i})$ such that $i \not\in \zeta(\hat{i}')$.

\textbf{Case (c).} \eqref{eq:completeness_sub_concrete_inputs} does not hold for $\hat{G}$, i.e.
\begin{equation}
    \exists \hat{s} \in \hat{S} \; . \; \exists i \in I \; . \; \forall \hat{i} \in \hat{q}(\hat{s})  \; . \; \zeta(\hat{i}) \neq \{i\}. \label{eq:case_c}
\end{equation}
Since $\hat{G}$ is assumed to be a soundness-guaranteeing $(\gamma,\zeta)$-abstraction of $G$, it must fulfil the full input coverage requirement from \eqref{eq:soundness_sub_input_coverage}, i.e.
\begin{equation}
        \forall (\hat{s}, i) \in \hat{S} \times I \; . \; \exists \hat{i} \in \hat{q}(\hat{s}) \; . \; i \in \zeta(\hat{i}).\label{eq:case_c_requirement}
\end{equation}
Fixing $\hat{s} \in \hat{S}, i \in I$ from \eqref{eq:case_c} and choosing them in \eqref{eq:case_c_requirement}, we combine both formulas to obtain
\begin{equation}
    (\forall \hat{i} \in \hat{q}(\hat{s})  \; . \; \zeta(\hat{i}) \neq \{i\}) \land (\exists \hat{i} \in \hat{q}(\hat{s}) \; . \; i \in \zeta(\hat{i})).
\end{equation}
We can weaken this to 
\begin{equation}
    \exists \hat{i} \in \hat{q}(\hat{s}) \; . \; (i \in \zeta(\hat{i}) \land \zeta(\hat{i}) \neq \{i\}).
\end{equation}
This implies $\exists \hat{i} \in \hat{q}(\hat{s}) \; . \; |\gamma(\hat{i})|>1$, where $i$ no longer appears. We reintroduce an existential quantifier for $\hat{s}$,
\begin{equation}
    \exists \hat{s} \in \hat{S} \; . \; \exists \hat{i} \in \hat{q}(\hat{s}) \; . \; |\zeta(\hat{i})|>1.
\end{equation}
This can be weakened to
\begin{equation}
    \exists \hat{s} \in \hat{S} \; . \; \exists \hat{i} \in \hat{q}(\hat{s}) \; . \; \forall i \in I \; . \; \zeta(\hat{i}) \neq \{ i \},
\end{equation}
which is the same as \eqref{eq:case_b_sub_nonterminating} and we can complete proving this case using the argument from Case (b).

\textbf{Case (d).} \eqref{eq:completeness_sub_step} does not hold for $\hat{G}$ but holds for $\hat{G}^*$, i.e. it holds that
\begin{subequations}
\begin{align}
    &\exists (\hat{s}, \hat{i}, s, i) \in \hat{S} \times \hat{I} \times S \times I \; . \; ((\gamma(\hat{s}), \zeta(\hat{i})) = (\{ s \}, \{i\}) \land \gamma(\hat{f}(\hat{s}, \hat{i})) \neq \{ f(s,i) \}), \label{eq:case_d_sub_nonterminating}\\
    &\forall (\hat{s}, \hat{i}, s, i) \in \hat{S} \times \hat{I} \times S \times I \; . \; ((\gamma(\hat{s}), \zeta(\hat{i})) = (\{ s \}, \{i\}) \Rightarrow \gamma(\hat{f}^*(\hat{s}, \hat{i})) = \{ f(s,i) \}). \label{eq:case_d_sub_terminating}
\end{align}
\end{subequations}
We fix $\hat{s}, \hat{i}, s, i$ from \eqref{eq:case_d_sub_nonterminating} and choose them in \eqref{eq:case_d_sub_terminating}, obtaining
\begin{subequations}
\begin{align}
    &(\gamma(\hat{s}), \zeta(\hat{i})) = (\{ s \}, \{i\}) \land \gamma(\hat{f}(\hat{s}, \hat{i})) \neq \{ f(s,i) \}, \label{eq:case_d_1_sub_nonterminating}\\
    &(\gamma(\hat{s}), \zeta(\hat{i})) = (\{ s \}, \{i\}) \Rightarrow \gamma(\hat{f}^*(\hat{s}, \hat{i})) = \{ f(s,i) \}. \label{eq:case_d_1_sub_terminating}
\end{align}
\end{subequations}
We therefore know $\gamma(\hat{f}(\hat{s}, \hat{i})) \neq \gamma(\hat{f}^*(\hat{s}, \hat{i}))$ and $|\gamma(\hat{f}^*(\hat{s}, \hat{i}))| = 1$. Due to assumed monotonicity, we also know from \eqref{eq:monotonicity_sub_step_coverage} that $\gamma(\hat{f}^*(\hat{s}, \hat{i})) \subseteq \gamma(\hat{f}(\hat{s}, \hat{i}))$. Therefore, it must be the case that
\begin{subequations}
\begin{align}
    \exists s \in \gamma(\hat{f}^*(\hat{s}, \hat{i})) \; . \; s \not\in \gamma(\hat{f}^*(\hat{s}, \hat{i})).
\end{align}
\end{subequations}
Reintroducing the quantifiers for $\hat{s}, \hat{i}$, we obtain \eqref{eq:step_tightening} from $\hat{G}$ to $\hat{G}^*$, which completes the proof. \qed
\end{proof}


\end{subappendices}

\end{document}